%%
%% This file is part of the 'Elsarticle Bundle'.
%% ---------------------------------------------
%%
%% It may be distributed under the conditions of the LaTeX Project Public
%% License, either version 1.2 of this license or (at your option) any
%% later version.  The latest version of this license is in
%%    http://www.latex-project.org/lppl.txt
%% and version 1.2 or later is part of all distributions of LaTeX
%% version 1999/12/01 or later.
%%
%% The list of all files belonging to the 'Elsarticle Bundle' is
%% given in the file `manifest.txt'.
%%

%% Template article for Elsevier's document class `elsarticle'
%% with numbered style bibliographic references
%% SP 2008/03/01
%%
%%
%%
%% $Id: elsarticle-template-num.tex 4 2009-10-24 08:22:58Z rishi $
%%
%%
\documentclass[final,3p]{elsarticle}

%% Use the option review to obtain double line spacing
%% \documentclass[preprint,review,12pt]{elsarticle}

%% Use the options 1p,twocolumn; 3p; 3p,twocolumn; 5p; or 5p,twocolumn
%% for a journal layout:
%% \documentclass[final,1p,times]{elsarticle}
%% \documentclass[final,1p,times,twocolumn]{elsarticle}
%% \documentclass[final,3p,times]{elsarticle}
%% \documentclass[final,3p,times,twocolumn]{elsarticle}
%% \documentclass[final,5p,times]{elsarticle}
%% \documentclass[final,5p,times,twocolumn]{elsarticle}

%% if you use PostScript figures in your article
%% use the graphics package for simple commands
%% \usepackage{graphics}
%% or use the graphicx package for more complicated commands
%% \usepackage{graphicx}
%% or use the epsfig package if you prefer to use the old commands
%% \usepackage{epsfig}

%% The amssymb package provides various useful mathematical symbols
\usepackage{amssymb, amsmath}
\usepackage{mathrsfs}
\usepackage{algorithm}
\usepackage{algpseudocode}
\usepackage{hyperref}
\usepackage{braket}
%% The amsthm package provides extended theorem environments
\usepackage{amsthm}

%% The lineno packages adds line numbers. Start line numbering with
%% \begin{linenumbers}, end it with \end{linenumbers}. Or switch it on
%% for the whole article with \linenumbers after \end{frontmatter}.
%% \usepackage{lineno}

%% natbib.sty is loaded by default. However, natbib options can be
%% provided with \biboptions{...} command. Following options are
%% valid:

%%   round  -  round parentheses are used (default)
%%   square -  square brackets are used   [option]
%%   curly  -  curly braces are used      {option}
%%   angle  -  angle brackets are used    <option>
%%   semicolon  -  multiple citations separated by semi-colon
%%   colon  - same as semicolon, an earlier confusion
%%   comma  -  separated by comma
%%   numbers-  selects numerical citations
%%   super  -  numerical citations as superscripts
%%   sort   -  sorts multiple citations according to order in ref. list
%%   sort&compress   -  like sort, but also compresses numerical citations
%%   compress - compresses without sorting
%%
%% \biboptions{comma,round}

% \biboptions{}

\newtheorem{thm}{Theorem}
\newtheorem{lem}[thm]{Lemma}
\newtheorem{col}[thm]{Corollary}
\theoremstyle{definition}
\newtheorem{mydef}{Definition}
\newtheorem{rmk}{Remark}

\journal{ }

\begin{document}

\begin{frontmatter}

%% Title, authors and addresses

%% use the tnoteref command within \title for footnotes;
%% use the tnotetext command for the associated footnote;
%% use the fnref command within \author or \address for footnotes;
%% use the fntext command for the associated footnote;
%% use the corref command within \author for corresponding author footnotes;
%% use the cortext command for the associated footnote;
%% use the ead command for the email address,
%% and the form \ead[url] for the home page:
%%
%% \title{Title\tnoteref{label1}}
%% \tnotetext[label1]{}
%% \author{Name\corref{cor1}\fnref{label2}}
%% \ead{email address}
%% \ead[url]{home page}
%% \fntext[label2]{}
%% \cortext[cor1]{}
%% \address{Address\fnref{label3}}
%% \fntext[label3]{}

\title{Optimal Separation in Exact Query Complexities for Simon's Problem}

%% use optional labels to link authors explicitly to addresses:
%% \author[label1,label2]{<author name>}
%% \address[label1]{<address>}
%% \address[label2]{<address>}

\author{Guangya Cai}
\author{Daowen Qiu\corref{one}}
\cortext[one]{ Corresponding author.\\ \indent{\it E-mail address:} issqdw@mail.sysu.edu.cn (D. Qiu)}

\address{Institute of Computer Science Theory,    School of Data and Computer Science, Sun Yat-sen University, Guangzhou 510006, China}

\begin{abstract}
%% Text of abstract
\emph{Simon's problem} is one of the most important problems demonstrating the power of quantum computers,
which achieves a large separation between quantum and classical query complexities. However, Simon's discussion
on his problem was limited to \emph{bounded-error} setting, which means his algorithm can not always get
the correct answer. \emph{Exact} quantum algorithms for Simon's problem have also been proposed,
which deterministically solve the problem with $O(n)$ queries. Also the quantum lower bound $\Omega(n)$ for Simon's problem is known. Although these algorithms are either complicated
or specialized, their results give an $O(n)$ versus $\Omega(\sqrt{2^{n}})$ separation in exact query
complexities for Simon's problem ($\Omega(\sqrt{2^{n}})$ is the lower bound for classical probabilistic algorithms),
but it has not been proved whether this separation is optimal.
In this paper, we propose another exact quantum algorithm for solving Simon's problem with $O(n)$ queries, which
is simple, concrete and does not rely on special query oracles. Our algorithm combines Simon's algorithm
with the quantum amplitude amplification technique to ensure its determinism.
In particular, we show that Simon's problem can be solved by a classical \emph{deterministic} algorithm with
$O(\sqrt{2^{n}})$ queries (as we are aware, there were no classical deterministic  algorithms for solving Simon's problem with $O(\sqrt{2^{n}})$ queries). Combining some previous results,
we obtain the optimal separation in exact query complexities for Simon's problem:
$\Theta({n})$ versus $\Theta({\sqrt{2^{n}}})$.
\end{abstract}

\begin{keyword}
%% keywords here, in the form: keyword \sep keyword

%% MSC codes here, in the form: \MSC code \sep code
%% or \MSC[2008] code \sep code (2000 is the default)
simon's problem \sep exact query complexity \sep quantum computing
\end{keyword}

\end{frontmatter}

%%
%% Start line numbering here if you want
%%
% \linenumbers

%% main text
\section{Introduction}
Query complexity has been very useful to study the relative power of quantum computation and classical computation \cite{BW02,MJM11}.
According to their output, query algorithms can be studied either in the \emph{bounded-error} setting (the algorithm
gives the correct result with probability at least $2/3$) or in the \emph{exact} setting (the algorithm gives the correct
result with certainty). For the bounded-error case, there are many algorithms achieving large separation in query complexities
(for example, \cite{Aaronson2015}\cite{Grover1996}), and some of them have exponential speedup for computing partial functions (\cite{Qiu2016} includes a more detailed list),
though it is not known whether the separation is optimal for some of them.

As for exact query complexity, the result is more limited. For total functions, Ambainis \cite{Ambainis2013} gave the first superlinear speedup example,
and the best known separation is $\widetilde{O}(n)$ versus $\Omega(n^{2})$ \cite{Ambainis2016},
which computes a variant of functions introduced in \cite{Goos2015}. In fact, it has been proved that the quantum query algorithms can
only achieve polynomial speedup with degree at most three \cite{Midrijanis2004}. However, for computing partial functions,
there can be an exponential separation, and the first example is the well-known Deutsch-Jozsa problem \cite{Deutsch1992}, whose separation
is 1 versus $n/2+1$. In \cite{Qiu2016}, the optimal separation for a generalized Deutsch-Jozsa problem was given, which is still an exponential one.

Simon's problem \cite{Simon1997} is a famous computational problem that achieves exponential separation in query
complexities. In the bounded-error setting, Simon gave an elegant quantum algorithm which solves the problem
with $O(n)$ queries and the physical realization has demonstrated its efficiency \cite{Tame2014}.
The $\Omega(n)$ lower bound was also proved in \cite{Koiran2007} using polynomial method \cite{Beals2001}.
On the other hand, the classical probabilistic query complexity for this problem is $\Theta(\sqrt{2^{n}})$, which shows that
the $\Theta(n)$ versus $\Theta(\sqrt{2^{n}})$ separation is an optimal one.

 As for the exact query complexities of Simon problem,
Brassard and H{\o}yer \cite{Brassard1997} combined Simon's algorithm with two post-processing subroutines
to ensure that their algorithm solves the problem exactly, which also requires $O(n)$ queries. However, their
algorithm is quite complicated and involved. Mihara and Sung \cite{Mihara2003} proposed a simpler exact
algorithm, but their algorithm relies on some non-standard query oracles and they did not show the construction
of their oracles. Moreover, the $\Omega(n)$ quantum query lower bound is a direct corollary of
previous bounded-error lower bound result. For the classical case, the $\Omega(\sqrt{2^{n}})$ lower bound can be \
easily got (Theorem 6). As we are aware, it is not known whether this lower bound is a tight one,
so it is not known whether the $O(n)$ versus $\Omega(\sqrt{2^{n}})$ is optimal either.

In this paper, we propose a new exact quantum algorithm for solving Simon's problem also with $O(n)$ queries,
which is much simpler and more concrete than Brassard and H{\o}yer's algorithm \cite{Brassard1997} and does not rely on some non-standard
query oracles as Mihara and Sung's construction \cite{Mihara2003}. Our algorithm directly combines Simon's algorithm with the
quantum amplitude amplification technique \cite{Brassard2002} to ensure we get an exact result. Then,
we design a classical deterministic algorithm for solving Simon's problem with $O(\sqrt{2^{n}})$ queries,
which relies on some crucial insights about the bitwise exclusive-or operation results of the pairs of strings
which is queried by the algorithm. Thus, we prove the $\Theta(\sqrt{2^{n}})$ classical deterministic query complexity
for Simon's problem. With previously established results on exact quantum query complexity, we can get the optimal
separation in exact query complexities for Simon's problem: $\Theta(n)$ versus $\Theta(\sqrt{2^{n}})$.

The remainder of the paper is organized as follows. In Section 2, we review Simon's problem, describe the Simon's algorithm \cite{Simon1997} with a different way, and present some notions and notation that will be used hereinafter. Then in Section 3, we discuss the quantum query complexity of Simon's problem and give a new exact quantum algorithm to solve Simon's problem also with $O(n)$ queries. After that, in Section 4, we discuss the classical query complexity of Simon's problem and design a classical deterministic algorithm for solving Simon's problem with $O(\sqrt{2^{n}})$ queries. Finally, conclusions are summarized in Section 5.

\section{Preliminaries}

In the interest of readability, this section serves to introduce some basic notions concerning quantum computation and Simon's problem.

\subsection{Basic introduction to quantum computation}
First, let us introduce some basic terminology of quantum computation. For the details, we can refer to \cite{NC00}.

In quantum computers, the minimal unit of
information is called a \emph{quantum bit} or a \emph{qubit}. As it is known, the classical bit only has a state -
either 0 or 1, but a qubit can be a \emph{superposition} of the two states, written $\Ket{\psi}=\alpha
\Ket{0}+\beta \Ket{1}$. The number $\alpha$ and $\beta$ are complex numbers satisfying $|\alpha|^{2}+|\beta|^{2}=1$,
Put another way, the state of a qubit is a vector in two-dimensional complex vector space. $\Ket{0}$ and
$\Ket{1}$ are known as \emph{computational basis states}, and $\alpha$ and $\beta$ are the \emph{amplitudes} of
the relevant computational basis states.

There are two things we can do with a qubit: measure it or let it evolve unitarily without measurement. We deal
with the measurement first. The most straightforward one is the measurement in the computational basis. In this
way, the measured qubit is either $\Ket{0}$ or $\Ket{1}$. By physical restriction, we do not know the measurement result in advance, but we can ensure that we will see $\Ket{0}$ with probability $|\alpha|^{2}$ and $\Ket{1}$
with probability $|\beta|^{2}$. Of course, there exist other more general kinds of measurement,
but throughout this paper, we only use measurement in the computational basis.

Instead of measuring $\Ket{\psi}$, we can also apply some operations to it. By a complex matrix $U$,
a state $\Ket{\psi}$ can be transformed to a state $\Ket{\varphi}=U\Ket{\psi}$. According to the principle of quantum mechanic \cite{NC00}, the transformation must be a \emph{unitary}
transformation, so $U$ must be a \emph{unitary} matrix.

The notions and notation above describe a system of one qubit, similarly we can think of systems of multiply
qubits. A register of $n$ qubits has $2^{n}$ basis states, each of form
$\Ket{x}=\Ket{x_{1},x_{2},\cdots,x_{n}}=\Ket{x_{1}}\otimes \Ket{x_{2}}\otimes \cdots \otimes\Ket{x_{n}}$,
where $\otimes$ is the tensor product operation and $x\in \{0,1\}^{n}$. The state of the $n$ qubit registers
can be the superposition of the these basis states.
The measurement and state transformation of multiply qubits are similar to the one qubit case as well. Note
that we are also using the tensor product operation to couple the transformation operators on different parts
of the register, i.e. $(A\otimes B)(\Ket{x}\otimes \Ket{y})=A\Ket{x}\otimes B\Ket{y}$.

\subsection{Problem description}
Now, let us recall Simon's problem.
Let $n\geq 1$ be any positive integer and let $(\oplus): \{0,1\}^{n}\times
\{0,1\}^{n}\to \{0,1\}^{n}$ denote the bitwise exclusive-or operation. Suppose we are given a function
$f: \{0,1\}^{n}\to \{0,1\}^{m}$ with $m\geq n$, and we are promised that there exists an
$s\in \{0,1\}^{n}\setminus \{0^{n}\}$ such that for all $x,y\in \{0,1\}^{n}$, $f(x)=f(y)$ if and only if
$x=y$ or $x=y\oplus s$, the aim is to compute $s$.

There exists an associated decision problem for Simon's problem as well.
Suppose that the given function $f$ is either one-to-one, or satisfies the condition defined above. Then the purpose is to determine which of these conditions holds for $f$. Since any lower bound on this problem implies the same one
on the original Simon's problem, it would be useful for the lower bound proof in what follows.

It is now known that Simon's problem is an instance of the \emph{hidden subgroup problem}.
Some notions and notation in group theory  will help our description of quantum algorithms.
Let $G$ denote the group $(\{0,1\}^{n}, \oplus )$. For any subset $X\subseteq G$,
$\langle X \rangle$ denotes the subgroup generated by $X$. The set $X$ is \emph{linearly independent}
if $\langle X \rangle \neq \langle Y \rangle$ for any proper subset $Y$ of $X$. Let $H$ be a subgroup of $G$.
By $H^{\perp}$, we denote the subgroup of $G$ defined by
\begin{equation*}
H^{\perp}=\{g\in G\mid g\cdot h=0 \text{ for all } h\in H\}.
\end{equation*}

We use $(\cdot): \{0,1\}^{n}\times \{0,1\}^{n}\to \{0,1\}$ to denote the inner product modulo 2 of two $n$-bit strings.
Notice that $(H^{\perp})^{\perp}=H$ and $|\langle X \rangle ^{\perp}|=2^{n-|X|}$ if $X$ is linearly independent.

Let us turn our attention back to Simon's problem. At first glance, to solve Simon's problem, one should query a
pair of different strings $x$ and $y$ satisfying $f(x)=f(y)$ and compute $s=x\oplus y$ directly. However, in quantum
computation, Simon took a different step. With the notions and notation described above,
it is easy to show that Simon's problem is equivalent to finding a generation set for subgroup
$K=\{0,s\}$ where $f$ is constant and distinct on each coset of it. Instead of finding $K$ directly,
Simon designed a quantum algorithm finding a generating set of $K^{\perp}$, then computing the generating set of
$K$ with the fact that $(K^{\perp})^{\perp}=K$.

In quantum computation, for a given function $f$, there is an oracle $O_{f}$ that maps
$\Ket{x,y}\to \Ket{x,y\oplus f(x)}$. With this oracle in hand, Simon gave an algorithm that computes $s$
with expected $O(n)$ queries. See Algorithm \ref{alg:simon} for the completed algorithm.
Notice that our description of Simon's algorithm is different from the description in \cite{Simon1997} by
adding some extra examinations of the measured results. These post-processing steps
are to ensure we can get a correct answer in line 13. Also we notice that set $Y$ is not linearly independent
if $|Y|>1$ since $0^{n}\in Y$, but $Y\setminus \{0^{n}\}$ is linearly independent.

Concerning the more introduction to quantum query complexity, we can refer to \cite{BW02}.

\begin{algorithm}
\caption{\strut Simon's algorithm}
\label{alg:simon}
\begin{algorithmic}[1]
\Procedure{Simon}{integer $n$, integer $m$, operator $O_{f}$}\strut
\State $Y\gets \{0^{n}\}$
  \Repeat
  \State Prepare registers $\Ket{0^{n},0^{m}}$
  \State Apply $H^{\otimes n}$ to the first register
  \State Apply $O_{f}$ to the registers
  \State Apply $H^{\otimes n}$ to the first register
  \State Measure the first register, get the result $z$
  \If{$z\notin \langle Y \rangle$} \State $Y\gets Y\cup \{z\}$ \EndIf
  \Until{$|Y|=n$}
\State Find an arbitary $s\in \langle Y \rangle ^{\perp}\setminus \{0^{n}\}$
\State \Return $s$
\EndProcedure \strut
\end{algorithmic}
\end{algorithm}

\section{Exact quantum query complexity for Simon's problem}
Simon's discussion was limited to the bounded-error query. As we can see in the Algorithm \ref{alg:simon}.
If we keep being unlucky, the algorithm might run forever. For example, it is possible to keep
measuring the same result in line 8.
So if only $O(n)$ queries were allowed, we might not get a correct answer.
To design an efficient algorithm in exact query, we add a post-processing subroutine
after line 7 to ensure the loop in Algorithm \ref{alg:simon} runs exactly $n-1$ times, and the $O(n)$ upper
bound is given. On the other hand, one might ask if there exists a more efficient quantum algorithm, the
$\Omega(n)$ lower bound allows the improvement only in constant factor.

\subsection{The lower bound}
To prove the lower bound of quantum query complexity of Simon's problem, Koiran et al.\cite{Koiran2007}
investigated the associated decision problem of Simon's problem described above. Let the algorithm $\mathcal{A}$
be any algorithm that solves Simon's problem exactly since it accepts any bijection function with probability 1
and other functions fulfilling Simon's promise with probability 0. Based on the polynomial method,
Koiran et al. \cite{Koiran2007} first transformed the function describing the probability of $\mathcal{A}$ accepting a given input function into a carefully designed single variable function $Q_{n}(D)$ (defined below)
and prove that the inequality $deg\left (Q_{n}(D) \right ) \leq 2T(n)$ is also sufficient, where
$deg\left (Q_{n}(D) \right)$ denotes the degree of $Q_{n}(D)$ and $T(n)$ denotes the number of queries applied
by $\mathcal{A}$. Next, they proved the following useful lemma.

\begin{lem}[\cite{Koiran2007}]\label{lem:qlb}
Let $Q_{n}(D)$ be the probability that $\mathcal{A}$ accepts $f$ when $f$ is chosen uniformly at random
among the functions $\{0,1\}^{n}\to \{0,1\}^{m}$ hiding a subgroup of order $D$. Then
\begin{equation*}
deg\left (Q_{n}(D) \right )\geq \min \left( \frac{n}{2}, \frac{n+3}{4} \right).
\end{equation*}
\end{lem}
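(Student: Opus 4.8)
The plan is to reduce the lemma to a purely analytic statement about polynomials on the geometric progression $1,2,4,\dots,2^{n}$ and then settle it by a vanishing divided difference. Three facts are immediate. Bijections are precisely the functions hiding a subgroup of order $1$, and $\mathcal{A}$ accepts every bijection with certainty, so averaging gives $Q_n(1)=1$. Functions satisfying Simon's promise are precisely those hiding a subgroup of order $2$, and $\mathcal{A}$ rejects each of them with certainty, so $Q_n(2)=0$. Finally, for every admissible order $2^{d}$ the quantity $Q_n(2^{d})$ is a probability, so $Q_n(2^{d})\in[0,1]$ for $d=0,1,\dots,n$. Combined with the hypothesis $\deg Q_n\le 2T(n)$, it suffices to prove: \emph{every real polynomial $p$ with $p(1)=1$, $p(2)=0$ and $p(2^{d})\in[0,1]$ for all $d=0,\dots,n$ has $\deg p\ge\min(n/2,(n+3)/4)$}. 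I expect in fact to obtain the stronger bound $\deg p\ge n$, which of course implies the lemma and yields $T(n)\ge n/2$.

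To prove this, suppose for contradiction that $k:=\deg Q_n\le n-1$. Then the $k+2$ nodes $x_i=2^{i}$, $i=0,\dots,k+1$, all lie in the progression, and since a degree-$k$ polynomial is annihilated by the $(k+1)$-st divided difference,
\begin{equation*}
\sum_{i=0}^{k+1}\lambda_i\,Q_n(2^{i})=0,\qquad \lambda_i:=\prod_{j\ne i}(2^{i}-2^{j})^{-1}.
\end{equation*}
The $i=1$ term vanishes because $Q_n(2)=0$; the $i=0$ term equals $\lambda_0$ because $Q_n(1)=1$; and a short manipulation of the products gives the clean identities $|\lambda_0|/\lambda_{k+1}=2^{k(k+1)/2}$ and $\operatorname{sign}(\lambda_i/\lambda_{k+1})=(-1)^{k+1-i}$. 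Solving for the value at the largest node,
\begin{equation*}
Q_n(2^{k+1})=(-1)^{k}\,2^{k(k+1)/2}-\sum_{i=2}^{k}\frac{\lambda_i}{\lambda_{k+1}}\,Q_n(2^{i}).
\end{equation*}

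The point is now that the ``constant'' $2^{k(k+1)/2}$ is enormous, whereas the only terms that can push $Q_n(2^{k+1})$ back toward $[0,1]$ are those $i$ (which turn out to be exactly the odd $i\ge 3$) for which the sign of $\lambda_i/\lambda_{k+1}$ is unfavourable. Evaluating the Vandermonde ratios on the progression one gets, for every $i\ge 2$, a bound of the shape $|\lambda_i/\lambda_{k+1}|< C\cdot 2^{k(k+1)/2-i(i-1)/2}$ with an absolute constant $C$, so that $\sum_{i\ge 3,\ i\text{ odd}}|\lambda_i/\lambda_{k+1}|<\tfrac12\,2^{k(k+1)/2}$ uniformly in $k$. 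Maximising over the worst admissible assignment $Q_n(2^{i})\in\{0,1\}$ and using $Q_n(2^{i})\ge 0$, the displayed identity then forces $Q_n(2^{k+1})>1$ when $k$ is even and $Q_n(2^{k+1})<0$ when $k$ is odd — either way contradicting $Q_n(2^{k+1})\in[0,1]$. Hence $\deg Q_n\ge n$, so $2T(n)\ge n$ and $T(n)=\Omega(n)$.

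I expect essentially all the work to be in the middle step: putting the Vandermonde ratios $|\lambda_i|/\lambda_{k+1}$ into closed form on the grid $\{1,2,\dots,2^{k+1}\}$ (where the products $\prod_l(2^{l}-1)$ telescope against one another) and verifying the geometric-series estimate above for all $k$. The small cases already exhibit the mechanism, since they read $Q_n(2)-Q_n(1)=-1\ne 0$ for $k=0$, $Q_n(4)=-2$ for $k=1$, $Q_n(8)=8+7Q_n(4)\ge 8$ for $k=2$, and $Q_n(16)=15Q_n(8)-70Q_n(4)-64\le -49$ for $k=3$, all impossible. The probabilistic facts about $Q_n$, the vanishing divided difference, and the final sign bookkeeping are routine. (Incidentally, the bound $\deg Q_n\ge n$ obtained this way is stronger than the stated $\min(n/2,(n+3)/4)$; the latter is what Koiran et al.'s argument yields in the generality of the abelian hidden subgroup problem, where the admissible orders need not form the full chain $1,2,\dots,2^{n}$ that makes the above estimate so clean.)
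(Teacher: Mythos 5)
The paper does not actually prove this lemma --- it is quoted verbatim from Koiran, Nesme and Portier \cite{Koiran2007} --- so your argument cannot be ``the same as the paper's''; judged on its own, it is correct and in fact proves more, in the special case of Simon's problem, than the statement you were asked to prove. Your reduction uses exactly the right facts ($Q_{n}(1)=1$, $Q_{n}(2)=0$, and $Q_{n}(2^{d})\in[0,1]$ for every $d\le n$, the last because functions hiding a subgroup of order $2^{d}$ exist whenever $m\ge n$), and the one step you flag as ``the middle step'' does check out: writing the extrapolation weights on the nodes $1,2,\dots,2^{k}$ exactly, one gets $|\lambda_{i}|/\lambda_{k+1}=2^{k(k+1)/2-i(i-1)/2-i(k+1-i)}\binom{k+1}{i}_{2}\le c\,2^{k(k+1)/2-i(i-1)/2}$ with $c=\prod_{l\ge1}(1-2^{-l})^{-1}<3.5$, the sign pattern $(-1)^{k+1-i}$ as you state, and hence $\sum_{i\ge3,\ i\ \mathrm{odd}}|\lambda_{i}/\lambda_{k+1}|\le c\,(2^{-3}+2^{-10}+\cdots)\,2^{k(k+1)/2}<0.44\cdot2^{k(k+1)/2}$, so for $k\le n-1$ the identity forces $Q_{n}(2^{k+1})>1$ ($k$ even) or $Q_{n}(2^{k+1})<0$ ($k$ odd), with $k=0,1$ handled directly as you note. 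The comparison with \cite{Koiran2007} is then the interesting part: their argument is carried out for the general abelian hidden subgroup problem and only yields $\min(n/2,(n+3)/4)$, whereas your Lagrange-extrapolation argument, exploiting that for Simon's problem the full chain of orders $2^{0},\dots,2^{n}$ is available and that $Q_{n}(2)$ is pinned to $0$ (note $|\lambda_{1}/\lambda_{k+1}|\approx 2\cdot|\lambda_{0}/\lambda_{k+1}|$, so this pinning is essential --- the same argument would not survive a bounded-error relaxation of $Q_{n}(2)\le 1/3$), gives $\deg Q_{n}\ge n$ and hence $T(n)\ge n/2$; this improved constant is directly relevant to the paper's concluding remarks about the gap between the $3n-3$ upper bound and the known lower bound. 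The one thing your proof does not replace, and should be stated as an explicit hypothesis, is the symmetrization step of \cite{Koiran2007} asserting that the averaged acceptance probability agrees with a single polynomial in $D$ of degree at most $2T(n)$ at all admissible orders; your degree bound is conditional on that, exactly as the lemma itself is.
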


By Lemma \ref{lem:qlb}, it is easy to show the lower bound of quantum query complexity of Simon's problem.

\begin{thm}
Any exact quantum algorithm that solves Simon's problem requires $\Omega(n)$ queries.
\end{thm}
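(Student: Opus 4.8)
The plan is to derive Theorem 2 as a direct consequence of Lemma \ref{lem:qlb} together with the polynomial-method reduction recalled just above it. First I would recall the setup from \cite{Koiran2007}: if $\mathcal{A}$ is any exact quantum algorithm for Simon's problem making $T(n)$ queries, then in particular it solves the associated decision problem exactly, accepting bijections with probability $1$ and accepting functions satisfying Simon's promise with probability $0$. The polynomial method (\cite{Beals2001}) guarantees that the acceptance probability, as a function of the oracle's truth-table variables, is a polynomial of degree at most $2T(n)$; the reduction of \cite{Koiran2007} then shows that after averaging over all functions hiding a subgroup of a fixed order $D$, one obtains a single-variable polynomial $Q_n(D)$ with $\deg(Q_n(D)) \le 2T(n)$.

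Given that, the argument is a one-line chain of inequalities. Combining $\deg(Q_n(D)) \le 2T(n)$ with Lemma \ref{lem:qlb} yields
\begin{equation*}
2T(n) \;\ge\; \deg\!\left(Q_n(D)\right) \;\ge\; \min\!\left(\frac{n}{2}, \frac{n+3}{4}\right),
\end{equation*}
so that $T(n) \ge \tfrac14 \min\!\left(n/2,(n+3)/4\right) = \Omega(n)$. Hence any exact quantum algorithm for the decision problem, and a fortiori for Simon's problem itself (since solving Simon's problem lets one solve the decision version with no extra queries), requires $\Omega(n)$ queries.

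There is really no substantive obstacle here, since the heavy lifting — constructing $Q_n(D)$, verifying the degree bound survives the symmetrization, and proving the degree lower bound of Lemma \ref{lem:qlb} — is all imported from \cite{Koiran2007}. The only point that merits a sentence of care is the direction of the reduction: one must note that an algorithm computing $s$ exactly can be turned into an exact solver for the decision problem (e.g. by running it, then querying $f$ on the two putative preimages to check whether a genuine collision exists), so the lower bound for the decision problem transfers to the search problem. I would state this briefly and then conclude. \qed
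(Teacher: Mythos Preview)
Your proposal is correct and mirrors exactly what the paper does: the paper states Theorem~2 immediately after Lemma~\ref{lem:qlb} with only the remark that ``by Lemma~\ref{lem:qlb}, it is easy to show the lower bound,'' having already noted (in the problem-description section) that a lower bound for the decision version transfers to the original problem. One trivial slip: from $2T(n)\ge \min(n/2,(n+3)/4)$ you get $T(n)\ge \tfrac{1}{2}\min(\cdot)$, not $\tfrac{1}{4}\min(\cdot)$, but this does not affect the $\Omega(n)$ conclusion.
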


\subsection{The upper bound}
As described in \cite{Brassard1997} and \cite{Mihara2003},
to make Algorithm \ref{alg:simon} exact, we can do this by making sure $(Y\setminus \{0^{n}\})\cup \{z\}$
is always linearly independent when we get the measured result $z$ of the first register.
To accomplish this, Brassard and H{\o}yer \cite{Brassard1997} designed an algorithm with two different steps: (1) They wrote
$K^{\perp}$ as the internal direct sum of two subgroups, one of which is $\langle Y \rangle$, and they gave a
subroutine to guarantee that any eigenstate of the first register is an element of the other group. (2) They gave
a subroutine to preclude $0^{n}$ state of the first register. On the other hand, Mihara and Sung's algorithm \cite{Mihara2003}
relies on their specialized query oracle whose outputs depend on the comparison of different query results of
$f$. Their algorithm is also required to compute an $n$-bit string $w\in \langle Y \rangle ^{\perp}\setminus
\{0^{n}\}$ and they can guarantee that the measured result of the first register is an $n$-bit string
$y\in K^{\perp}$ satisfying $y\cdot w=1$ unless $w=s$.
Compared with their algorithms, our idea is more straight-forward:
making sure the measured result of the first register is in $K^{\perp}$ but not in $\langle Y \rangle$.
The first condition can be fulfilled with Simon's algorithm, and we utilize the quantum amplitude amplification technique
\cite{Brassard2002} to ensure the second one.

Let $\Ket{K^{\perp}, f(T)}$ be the registers states after line 7 of Algorithm \ref{alg:simon}, where
\begin{equation*}
\Ket{K^{\perp}, f(T)}=\frac{1}{2^{n-1}}
\sum_{\substack{x\in T \\ y\in K^{\perp}}}(-1)^{x\cdot y}\Ket{y,f(x)}
\end{equation*}
with $T$ being the set that consists of exactly one representative from each cosets of $K$.
Let $\mathcal{A}=(H^{\otimes n}\otimes I^{\otimes m})O_{f}(H^{\otimes n}\otimes I^{\otimes m})$
denote the combined unitary operators from line 5 to line 7 in Algorithm \ref{alg:simon}.
Moreover, we denote $\mathcal{S}_{0}(\phi)$ and $\mathcal{S}_{\mathcal{A}}(\varphi, Y)$ as follows.
\begin{align*}
\mathcal{S}_{0}(\phi)\Ket{x,b} &=
\begin{cases}
\Ket{x,b}, & x\neq 0^{n} \text{ or } b\neq 0^{m},\\
e^{i\phi}\Ket{x,b}, & x=0^{n} \text{ and } b=0^{m},
\end{cases} \\
\mathcal{S}_{\mathcal{A}}(\varphi, Y)\Ket{x} &=
\begin{cases}
e^{i\varphi}\Ket{x}, & x\notin \langle Y \rangle,\\
\Ket{x}, & x\in \langle Y \rangle.
\end{cases}
\end{align*}
With the definitions of $\mathcal{S}_{0}(\phi)$ and $\mathcal{S}_{\mathcal{A}}(\varphi, Y)$,
we can define quantum amplitude amplification operator as follows
\begin{equation}
\mathcal{Q}=\mathcal{Q}(\mathcal{A},\phi,\varphi, Y)=
-\mathcal{A}\mathcal{S}_{0}(\phi)
\mathcal{A}^{\dagger}(\mathcal{S}_{\mathcal{A}}(\varphi, Y)\otimes I^{\otimes m}).
\end{equation}

In order to make our algorithm work, the crucial step is to eliminate all states in $\langle Y \rangle$ from
the first register. In quantum amplitude amplification process, one can accomplish this by choosing
appropriate $\phi,\varphi\in \mathbb{R}$ such that after applying $\mathcal{Q}$, the amplitude of all states
in $\langle Y \rangle$ of the first register become zero. Let set $X=K^{\perp}\setminus \langle Y \rangle$ and
let $\Ket{K^{\perp},f(T)}=\Ket{\Psi_{X}}+\Ket{\Psi_{Y}}$, where $\Ket{\Psi_{X}}$ denotes
the projection onto the good state subspace (subspace spanned by
$\big\{\Ket{x,b} \mid x\in X, b\in \{0,1\}^{m}\big\}$),
and $\Ket{\Psi_{Y}}$ denotes the projection onto the bad state subspace (subspace spanned
by $\big\{\Ket{y,b} \mid y\in \langle Y \rangle, b\in \{0,1\}^{m} \big\}$).
Notice that $|\langle Y \rangle|=2^{|Y|-1}$ and $|K^{\perp}|=2^{n-1}$. We can obtain the following lemma.

\begin{lem}[\cite{Brassard2002}]
Let $\mathcal{Q}=\mathcal{Q}(\mathcal{A},\phi,\varphi, Y)$. Then
\begin{align*}
\mathcal{Q}\Ket{\Psi_{X}} &= e^{i\varphi}\left((1-e^{i\phi})(1-2^{l-n})-1\right)\Ket{\Psi_{X}} +
                             e^{i\varphi}(1-e^{i\phi})(1-2^{l-n})\Ket{\Psi_{Y}}, \\
\mathcal{Q}\Ket{\Psi_{Y}} &= (1-e^{i\phi})2^{l-n}\Ket{\Psi_{X}} -
                             \left((1-e^{i\phi})(1-2^{l-n})+e^{i\phi} \right)\Ket{\Psi_{Y}}
\end{align*}
where $l=|Y|$.
\end{lem}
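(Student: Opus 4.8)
The plan is to diagonalize $\mathcal{Q}$ on the two-dimensional subspace spanned by $\Ket{\Psi_X}$ and $\Ket{\Psi_Y}$ by tracking each factor of $\mathcal{Q}$ separately. Write $\Ket{\Upsilon}:=\Ket{K^{\perp},f(T)}=\mathcal{A}\Ket{0^{n},0^{m}}$. First I would observe that $\mathcal{S}_{0}(\phi)=I+(e^{i\phi}-1)\Ket{0^{n},0^{m}}\Bra{0^{n},0^{m}}$, so conjugating by the unitary $\mathcal{A}$ gives $\mathcal{A}\,\mathcal{S}_{0}(\phi)\,\mathcal{A}^{\dagger}=I+(e^{i\phi}-1)\Ket{\Upsilon}\Bra{\Upsilon}$ and hence
\begin{equation*}
\mathcal{Q}=-\big(I+(e^{i\phi}-1)\Ket{\Upsilon}\Bra{\Upsilon}\big)\big(\mathcal{S}_{\mathcal{A}}(\varphi,Y)\otimes I^{\otimes m}\big).
\end{equation*}
Since every basis state occurring in $\Ket{\Psi_X}$ has its first register in $X=K^{\perp}\setminus\langle Y\rangle$ and every basis state in $\Ket{\Psi_Y}$ has its first register in $\langle Y\rangle$, the operator $\mathcal{S}_{\mathcal{A}}(\varphi,Y)\otimes I^{\otimes m}$ sends $\Ket{\Psi_X}\mapsto e^{i\varphi}\Ket{\Psi_X}$ and $\Ket{\Psi_Y}\mapsto\Ket{\Psi_Y}$.

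Next I would evaluate the overlaps $\braket{\Upsilon|\Psi_X}$ and $\braket{\Upsilon|\Psi_Y}$. As $\Ket{\Psi_X}$ and $\Ket{\Psi_Y}$ are the projections of $\Ket{\Upsilon}=\Ket{\Psi_X}+\Ket{\Psi_Y}$ onto the orthogonal good and bad subspaces, these overlaps are just $\|\Psi_X\|^{2}$ and $\|\Psi_Y\|^{2}$. From the explicit formula for $\Ket{\Upsilon}$ and the fact that the states $\Ket{y,f(x)}$ with $x$ ranging over the coset representatives $T$ are mutually orthonormal (because $f$ is distinct on distinct cosets of $K$), a direct count yields $\|\Psi_X\|^{2}=|T|\,|X|/2^{2(n-1)}=|X|/2^{n-1}$; using $|T|=2^{n-1}$, $|K^{\perp}|=2^{n-1}$, $|\langle Y\rangle|=2^{l-1}$ and the inclusion $\langle Y\rangle\subseteq K^{\perp}$ (so that $|X|=2^{n-1}-2^{l-1}$) this equals $1-2^{l-n}$, and similarly $\|\Psi_Y\|^{2}=2^{l-n}$.

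Finally I would combine everything: applying first $\mathcal{S}_{\mathcal{A}}(\varphi,Y)\otimes I^{\otimes m}$, then $I+(e^{i\phi}-1)\Ket{\Upsilon}\Bra{\Upsilon}$, then the global sign, and substituting $\Ket{\Upsilon}=\Ket{\Psi_X}+\Ket{\Psi_Y}$ together with the overlap values above, gives
\begin{align*}
\mathcal{Q}\Ket{\Psi_X}&=-e^{i\varphi}\Ket{\Psi_X}-e^{i\varphi}(e^{i\phi}-1)(1-2^{l-n})\big(\Ket{\Psi_X}+\Ket{\Psi_Y}\big),\\
\mathcal{Q}\Ket{\Psi_Y}&=-\Ket{\Psi_Y}-(e^{i\phi}-1)2^{l-n}\big(\Ket{\Psi_X}+\Ket{\Psi_Y}\big);
\end{align*}
regrouping the coefficients of $\Ket{\Psi_X}$ and $\Ket{\Psi_Y}$ is a short algebraic rearrangement that reproduces the stated identities.

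I do not expect a genuine obstacle: this lemma is the specialization of the standard amplitude-amplification identity to the good/bad splitting at hand, so the proof is essentially bookkeeping. The points deserving a little care are the combinatorial evaluation of $\|\Psi_X\|^{2}$ and $\|\Psi_Y\|^{2}$ — in particular noting that $\langle Y\rangle\subseteq K^{\perp}$, which holds because every vector placed into $Y$ by Simon's algorithm is a measured result lying in $K^{\perp}$ — and checking that $\mathcal{S}_{\mathcal{A}}(\varphi,Y)$, although it touches only the first register, respects the good/bad decomposition; both are immediate once observed.
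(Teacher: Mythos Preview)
Your argument is correct and is precisely the standard derivation of the amplitude-amplification two-dimensional action: rewrite $\mathcal{S}_{0}(\phi)$ as a rank-one perturbation of the identity, conjugate by $\mathcal{A}$ to turn it into $I+(e^{i\phi}-1)\Ket{\Upsilon}\Bra{\Upsilon}$, read off the action of $\mathcal{S}_{\mathcal{A}}(\varphi,Y)\otimes I^{\otimes m}$ on the good/bad decomposition, and compute the two overlaps $\|\Psi_X\|^{2}=1-2^{l-n}$ and $\|\Psi_Y\|^{2}=2^{l-n}$ from the explicit form of $\Ket{K^{\perp},f(T)}$. The final regrouping indeed reproduces both displayed lines (in particular the $\Ket{\Psi_Y}$-coefficient $-1+(1-e^{i\phi})2^{l-n}$ in your expression for $\mathcal{Q}\Ket{\Psi_Y}$ equals the paper's $-\big((1-e^{i\phi})(1-2^{l-n})+e^{i\phi}\big)$).

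There is nothing to compare against in the paper itself: the lemma is quoted from \cite{Brassard2002} without proof, so the authors simply invoke the general amplitude-amplification identity and specialize the success probability $a$ to $1-2^{l-n}$. Your write-up supplies exactly the specialization they omit, and the two ``points deserving a little care'' you flag --- the inclusion $\langle Y\rangle\subseteq K^{\perp}$ and the fact that $\mathcal{S}_{\mathcal{A}}(\varphi,Y)\otimes I^{\otimes m}$ acts diagonally on the good/bad split --- are the only places where the specific structure of Simon's problem enters. No gaps.
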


By making sure the resulting superposition has inner product zero with $\Ket{\Psi_{Y}}$,
we can obtain the following equation.
\begin{equation}\label{eq:qaa}
e^{i\varphi}(1-e^{i\phi})(1-2^{l-n}) = (1-e^{i\phi})(1-2^{l-n})+e^{i\phi}.
\end{equation}
The chosen $\phi$ and $\varphi$ must satisfy Equation \ref{eq:qaa}. Simple calculation shows that:
\begin{equation}
\phi = 2\arctan{\left(\sqrt{\frac{2^{n-l}}{3\cdot 2^{n-l}-4}}\right)}, \qquad
\varphi = \arccos{\left(\frac{2^{n-l-1}-1}{2^{n-l}-1}\right)}.
\end{equation}
Since $1\leq l\leq n-1$, we can always obtain $\phi,\varphi\in \mathbb{R}$. Thus we get Algorithm \ref{alg:qaa}.

\begin{algorithm}
\caption{\strut Quantum amplitude amplification measuring good states}
\label{alg:qaa}
\begin{algorithmic}[1]
\Require Input parameters satisfy associated definitions \strut
\Ensure $z\in X$
\Procedure{QuAmpAmp}{registers $\Ket{K^{\perp},f(T)}$, integer $n$, operator $\mathcal{A}$, set $Y$}
\State $l\gets |Y|$
\State $\phi\gets 2\arctan{\left(\sqrt{\frac{2^{n-l}}{3\cdot 2^{n-l}-4}}\right)}$
\State $\varphi\gets \arccos{\left(\frac{2^{n-l-1}-1}{2^{n-l}-1}\right)}$
\State Apply $\mathcal{Q}$ to $\Ket{K^{\perp},f(T)}$ where
$\mathcal{Q}=-\mathcal{A}\mathcal{S}_{0}(\phi)
\mathcal{A}^{\dagger}(\mathcal{S}_{\mathcal{A}}(\varphi, Y)\otimes I^{\otimes m})$
\State Measure the first register, get the result $z$
\State \Return $z$
\EndProcedure \strut
\end{algorithmic}
\end{algorithm}

As we can see, the implementation of Algorithm \ref{alg:qaa} requires the constructions of $\mathcal{S}_{0}(\phi)$
and $\mathcal{S}_{\mathcal{A}}(\varphi, Y)$. Here, we discuss the circuit construction and the circuit complexity
of them. It is trivial (in theory) to construct $\mathcal{S}_{0}(\phi)$ and we would use $O(n+m)$ gates for this
task. As for the construction of $\mathcal{S}_{\mathcal{A}}(\varphi, Y)$, the major problem is to determine
whether the given $n$-bit string $x$ is in $\langle Y \rangle$ or not. If $l=1$, we would simply use an $n$ qubits
zero controlled-NOT gate, and if $l>1$, we would solve a system of exclusive-or equations via LU decomposition.
At first glance, it seems that the LU decomposition of the coefficient matrix requires $O(n(l-1)^{2})$ operations.
However, only $O(n(l-1))$ operations are needed since the fact that the LU decomposition result of the previous run
of Algorithm \ref{alg:qaa} can be reused (which also means that all elements of $Y$ should be generated by Algorithm
\ref{alg:qaa} except the initial $0^{n}$). With the LU decomposition result, we need another $O(n(l-1))$ operations
to determine whether the system of equations has a valid solution with $x$ as right-hand side. While the right-hand
side is represented in quantum state, we would use $O(n(l-1))$ controlled-NOT gates for Gauss transformations, $O(l-1)$
swap gates for pivoting, and a final $n-l+1$ qubits zero controlled-NOT gate which the target qubit indicates whether
the solution is a valid one. The total number of gates required to solve the problem is $O(nl)$,
thus we can obtain the $O(nl)$ upper bound for the circuit complexity of $\mathcal{S}_{\mathcal{A}}(\varphi, Y)$.
Then, it is easy to obtain the following lemma.

\begin{lem}
Given $\Ket{K^{\perp},f(T)}$ and a set $Y\subseteq K^{\perp}$ which $\langle Y \rangle \neq K^{\perp}$, there
exits a quantum algorithm that outputs an $n$-bit string $z\in K^{\perp}\setminus \langle Y \rangle$. Moreover,
if the set $Y$ only contains $n$-bit strings generated by that algorithm as well as some constant number initial
elements, the algorithm requires $O(1)$ queries and $O(nl+m)$ gates for other operations, where $l=|Y|$.
\end{lem}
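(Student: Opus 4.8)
The plan is to assemble this lemma directly from the machinery built up in the preceding paragraphs, treating Algorithm~\ref{alg:qaa} (QuAmpAmp) as the algorithm whose existence is being asserted, and verifying each of the three claims—correctness of the output, the $O(1)$ query bound, and the $O(nl+m)$ gate bound—in turn. First I would establish correctness: starting from $\Ket{K^{\perp},f(T)}=\Ket{\Psi_X}+\Ket{\Psi_Y}$ with $X=K^{\perp}\setminus\langle Y\rangle$, apply the previous lemma (the Brassard--H\o{}yer formulas for $\mathcal{Q}\Ket{\Psi_X}$ and $\mathcal{Q}\Ket{\Psi_Y}$) to compute $\mathcal{Q}\Ket{K^{\perp},f(T)}$, collect the coefficient of $\Ket{\Psi_Y}$, and observe that setting it to zero is exactly Equation~\eqref{eq:qaa}. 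Then I would check that the displayed choices of $\phi$ and $\varphi$ solve Equation~\eqref{eq:qaa}—this is the ``simple calculation'' already quoted—and that since $1\le l\le n-1$ the arguments of $\arctan$ and $\arccos$ are real and in range (in particular $3\cdot 2^{n-l}-4>0$ and $\big|\tfrac{2^{n-l-1}-1}{2^{n-l}-1}\big|\le 1$), so $\phi,\varphi\in\mathbb{R}$ and $\mathcal{Q}$ is a genuine unitary. Consequently the post-$\mathcal{Q}$ state has zero amplitude on every basis vector $\Ket{y,b}$ with $y\in\langle Y\rangle$, so measuring the first register yields $z\notin\langle Y\rangle$; since the state is supported on $K^{\perp}$ throughout (the good subspace sits inside it), $z\in K^{\perp}\setminus\langle Y\rangle=X$, which is also the \texttt{Ensure} clause of the algorithm.

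Next I would bound the resources. For queries: $\mathcal{Q}=-\mathcal{A}\,\mathcal{S}_0(\phi)\,\mathcal{A}^{\dagger}\,(\mathcal{S}_{\mathcal{A}}(\varphi,Y)\otimes I^{\otimes m})$, and $\mathcal{A}=(H^{\otimes n}\otimes I^{\otimes m})O_f(H^{\otimes n}\otimes I^{\otimes m})$, so $\mathcal{A}$ and $\mathcal{A}^{\dagger}$ each invoke the oracle $O_f$ exactly once and nothing else touches $O_f$; hence $2=O(1)$ queries. For gates: the Hadamard layers contribute $O(n)$ gates; $\mathcal{S}_0(\phi)$ is an $(n{+}m)$-qubit zero-controlled phase, costing $O(n+m)$ gates as already noted; and the cost of $\mathcal{S}_{\mathcal{A}}(\varphi,Y)$ is $O(nl)$ by the circuit-construction discussion immediately above the statement—this is where the hypothesis that the elements of $Y$ (apart from a constant number of initial ones, here just $0^n$) are themselves produced by this same algorithm is used, since it licenses reusing the LU decomposition from the previous invocation and brings the per-call linear-algebra cost down from $O(n(l-1)^2)$ to $O(n(l-1))$. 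Summing, the non-query gate count is $O(n)+O(n+m)+O(nl)=O(nl+m)$, as claimed.

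The only genuinely delicate point—and the step I would spend the most care on—is the ``reuse'' argument underpinning the $O(nl)$ bound for $\mathcal{S}_{\mathcal{A}}(\varphi,Y)$, i.e.\ that the membership test ``$x\in\langle Y\rangle$?'' can be implemented in $O(nl)$ gates given that $Y$ grew one element at a time through previous QuAmpAmp calls. I would make this precise by maintaining, as persistent classical side data across invocations, an $\mathrm{LU}$-style factorization of the $(l-1)\times n$ matrix whose rows are the nonzero elements of $Y$ over $\mathbb{F}_2$; when a new independent vector is appended, updating the factorization costs $O(n(l-1))$ bit operations (one elimination pass), and given the factorization the coherent solvability check for right-hand side $x$ uses $O(n(l-1))$ controlled-NOT gates for the forward/back substitution, $O(l-1)$ SWAPs for pivoting, and one $(n-l+1)$-qubit zero-controlled-NOT to flag whether the residual is zero—i.e.\ whether $x\in\langle Y\rangle$—which is exactly the control predicate $\mathcal{S}_{\mathcal{A}}(\varphi,Y)$ needs. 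Everything else is bookkeeping. I would close by remarking that the condition $\langle Y\rangle\neq K^{\perp}$ guarantees $X\neq\emptyset$, so the output is well defined, and that the case $l=1$ (where $\langle Y\rangle=\{0^n\}$) degenerates to a single $n$-qubit zero-controlled-NOT, consistent with the $O(nl)$ count.
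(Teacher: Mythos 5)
Your proposal is correct and follows essentially the same route as the paper: it invokes the amplitude-amplification operator $\mathcal{Q}$ with the stated choices of $\phi$ and $\varphi$ solving Equation~(2) to zero out the $\langle Y\rangle$-amplitudes (so the measured $z\in K^{\perp}\setminus\langle Y\rangle$), counts the two oracle calls inside $\mathcal{A}$ and $\mathcal{A}^{\dagger}$ for the $O(1)$ query bound, and reuses the paper's LU-decomposition-with-reuse argument for the $O(nl+m)$ gate bound. No substantive differences.
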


Combining Algorithm \ref{alg:simon} with Algorithm \ref{alg:qaa} in an obvious way, we can get Algorithm
\ref{alg:exactsimon}. Also, it is easy to derive the exact quantum query upper bound of Simon's problem.

\begin{algorithm}
\caption{\strut Exact Simon's algorithm}
\label{alg:exactsimon}
\begin{algorithmic}[1]
\Procedure{ExactSimon}{integer $n$, integer $m$, operator $O_{f}$}\strut
\State $Y\gets \{0^{n}\}$
  \Repeat
  \State Prepare registers $\Ket{0^{n},0^{m}}$
  \State Apply $\mathcal{A}$ to the registers where
         $\mathcal{A}=(H^{\otimes n}\otimes I^{\otimes m})O_{f}(H^{\otimes n}\otimes I^{\otimes m})$
  \State $z\gets$ \Call{QuAmpAmp}{$\Ket{K^{\perp},f(T)}$, $n$, $\mathcal{A}$, $Y$}
  \State $Y\gets Y\cup \{z\}$
  \Until{$|Y|=n$}
\State Find an arbitary $s\in \langle Y \rangle ^{\perp}\setminus \{0^{n}\}$
\State \Return $s$
\EndProcedure \strut
\end{algorithmic}
\end{algorithm}

\begin{thm}
There exists an exact quantum algorithm that solves Simon's problem with $O(n)$ queries.
\end{thm}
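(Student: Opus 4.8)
The plan is to assemble the pieces already in place: Algorithm~\ref{alg:exactsimon} is the obvious composition of Simon's preprocessing (lines~4--5) with the amplitude-amplification subroutine \textsc{QuAmpAmp}, so the proof is mostly a matter of verifying (i) correctness and exactness of the loop, and (ii) the query count. First I would establish the loop invariant: at the start of each iteration, $Y \subseteq K^{\perp}$ and $Y \setminus \{0^n\}$ is linearly independent, so that $|\langle Y\rangle| = 2^{|Y|-1}$ with $1 \le |Y| \le n-1$. Under this invariant the hypotheses of the preceding lemma are met (in particular $\langle Y\rangle \ne K^{\perp}$, since $|\langle Y\rangle| \le 2^{n-2} < 2^{n-1} = |K^{\perp}|$), so \textsc{QuAmpAmp} is well-defined and, by its \texttt{Ensure} clause, deterministically returns some $z \in X = K^{\perp}\setminus\langle Y\rangle$. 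Adding such a $z$ to $Y$ keeps $Y\subseteq K^{\perp}$ and, because $z\notin\langle Y\rangle$, increments $|\langle Y\rangle|$ by a factor of two, i.e.\ increases $|Y\setminus\{0^n\}|$ by exactly one while preserving linear independence. Hence the invariant is restored and $|Y|$ strictly increases each pass.

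Next I would argue termination and output correctness. Since $|Y|$ starts at $1$ and increases by exactly one per iteration, the loop runs exactly $n-1$ times and halts with $|Y| = n$, at which point $Y\setminus\{0^n\}$ is a linearly independent subset of $K^{\perp}$ of size $n-1$; as $|K^{\perp}| = 2^{n-1}$, it is in fact a generating set, so $\langle Y\rangle = K^{\perp}$. Therefore $\langle Y\rangle^{\perp} = (K^{\perp})^{\perp} = K = \{0^n, s\}$, and line~9 can only pick $s$ itself; the algorithm returns the correct answer with certainty. For the exactness claim one also needs that \textsc{QuAmpAmp} is genuinely exact, not bounded-error: this is precisely the content of Equation~\ref{eq:qaa} and the displayed choice of $\phi,\varphi$, which were derived so that $\mathcal{Q}\ket{K^{\perp},f(T)}$ has zero overlap with the bad subspace $\ket{\Psi_Y}$; hence the post-$\mathcal{Q}$ state is supported entirely on good states and the measurement yields $z\in X$ with probability $1$. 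I would spell out that the parameters are well-defined (real-valued) exactly because $1\le l\le n-1$, as already noted in the text.

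Finally, the query count. Lines~4--5 of each iteration invoke $O_f$ once. Inside \textsc{QuAmpAmp}, the operator $\mathcal{Q} = -\mathcal{A}\,\mathcal{S}_0(\phi)\,\mathcal{A}^{\dagger}(\mathcal{S}_{\mathcal{A}}(\varphi,Y)\otimes I^{\otimes m})$ contains $\mathcal{A}$ and $\mathcal{A}^{\dagger}$, each of which uses $O_f$ (resp.\ $O_f^{\dagger}$) exactly once, while $\mathcal{S}_0$ and $\mathcal{S}_{\mathcal{A}}$ make no oracle calls; so one iteration costs $O(1)$ queries, in agreement with the preceding lemma. Summing over the $n-1$ iterations gives $O(n)$ queries overall, which combined with the correctness and exactness established above proves the theorem. (If a remark on gate complexity is wanted, the per-iteration $O(nl+m)$ non-query gate bound from the lemma sums to $O(n^2 + nm)$, but this is not needed for the query statement.)

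The main obstacle is the exactness bookkeeping rather than anything deep: one must be careful that the amplitude-amplification step really drives the bad-subspace amplitude to exactly zero for the specific $l$-dependent parameters, and that reusing the LU decomposition across iterations (so that $Y$ consists only of algorithm-generated strings) is consistent with the hypotheses of the lemma. Everything else is a routine induction on $|Y|$.
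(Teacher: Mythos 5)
Your proposal is correct and follows essentially the same route as the paper: it fleshes out the intended argument that Algorithm~\ref{alg:exactsimon} combines Simon's preparation with the exactly-tuned amplitude-amplification step of Algorithm~\ref{alg:qaa} (via Equation~\ref{eq:qaa} and the stated $\phi,\varphi$), maintains the linear-independence invariant on $Y$, and uses a constant number of oracle calls (three, matching the paper's $3n-3$ count) in each of the $n-1$ iterations, giving $O(n)$ queries.
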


\section{Exact classical query complexity for Simon's problem}
In this section, we consider the other side of the exact query complexities separation --- the classical aspect.
We prove the lower bound and upper bound of exact classical query complexity for Simon's problem.
\subsection{The lower bound}
The $\Omega({\sqrt{2^{n}}})$ query complexity lower bound for any randomized algorithm
can be easily adapted from Simon's original paper, and one can see \cite{Kitaev2002} for the details.
This result already implies the query complexity lower bound for any deterministic algorithm.
However, we give a simpler and more concrete proof for the deterministic case here.

\begin{thm}
Any classical deterministic algorithm that solves Simon's problem requires $\Omega({\sqrt{2^{n}}})$ queries.
\end{thm}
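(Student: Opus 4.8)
The plan is to use an adversary argument against a deterministic algorithm that makes only $o(\sqrt{2^n})$ queries. Suppose $\mathcal{A}$ is such an algorithm and let $T = T(n)$ denote the number of queries it makes on the worst-case input. The adversary maintains, during the interaction with $\mathcal{A}$, a partial function: a set of query/answer pairs $(x_1, f(x_1)), \ldots, (x_t, f(x_t))$ with all $f(x_i)$ distinct. The key observation is that as long as the answers are all distinct, this partial view is consistent both with $f$ being a bijection (the ``one-to-one'' case of the associated decision problem) and, provided a suitable $s$ exists, with $f$ being two-to-one with hidden shift $s$. So the adversary simply answers each new query $x_t$ with a fresh value never used before; this is a legitimate strategy for a bijection, and we only need to check that it does not prematurely force the instance to be one-to-one.

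First I would count the ``forbidden'' shifts. After $t$ queries $x_1, \ldots, x_t$ have been made, a nonzero string $s$ is \emph{still possible} as a hidden shift (consistent with a two-to-one $f$ whose queried values are all distinct) precisely when $x_i \oplus x_j \neq s$ for every pair $i \neq j$ --- because if $x_i \oplus x_j = s$ then $f(x_i)$ and $f(x_j)$ would have had to be equal, contradicting distinctness. The number of pairwise XORs $x_i \oplus x_j$ is at most $\binom{t}{2} < t^2/2$, so the number of strings ruled out as candidate shifts after $t$ queries is less than $t^2/2$. Since there are $2^n - 1$ nonzero strings, as long as $t^2/2 < 2^n - 1$, i.e. $t < \sqrt{2^{n+1} - 2}$, there remains at least one nonzero string $s^\star$ that is consistent with a two-to-one instance.

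Now I would derive the contradiction. Run $\mathcal{A}$ against the all-distinct adversary for all $T$ of its queries, and suppose $T = o(\sqrt{2^n})$, so in particular $T < \sqrt{2^{n+1}-2}$ for $n$ large. When $\mathcal{A}$ halts it outputs some string; call it $r$ (or, in the decision version, it outputs ``one-to-one''). On the one hand, the transcript is consistent with $f$ being a bijection: extend the partial injection arbitrarily to a bijection $\{0,1\}^n \to \{0,1\}^m$ (possible since $m \geq n$ and the used values are distinct), on which any correct deterministic algorithm must answer ``one-to-one''. On the other hand, pick a nonzero $s^\star$ not among the $<T^2/2$ pairwise XORs; one can complete the partial function to a genuine two-to-one $f$ with hidden shift $s^\star$: pair up the queried points with fresh unqueried points via $x \mapsto x \oplus s^\star$ where needed and assign values consistently, then extend to all cosets of $\{0, s^\star\}$. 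On this completion the unique correct answer is $s^\star$, and moreover $s^\star$ can be chosen to differ from $r$ (there are plenty of admissible choices once $T^2/2 + 1 < 2^n - 1$). Since $\mathcal{A}$ is deterministic it produces the same transcript and hence the same output on both completions, so it is wrong on at least one of them --- contradicting exactness. Therefore $T = \Omega(\sqrt{2^n})$.

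The main obstacle, and the only place requiring care, is verifying that the partial function built by the adversary can actually be \emph{completed} to a valid Simon instance of each type --- in particular that choosing the fresh answer values never creates an obstruction, and that when we later commit to a shift $s^\star$ the already-answered pairs can be consistently folded into the two-to-one structure (each queried $x_i$ either has $x_i \oplus s^\star$ also queried, in which case their values must be made equal --- but they were answered distinctly, which is exactly why we excluded such $s^\star$ --- or $x_i \oplus s^\star$ is unqueried and we are free to set $f(x_i \oplus s^\star) = f(x_i)$). Handling the bookkeeping so that these two completions agree on the whole transcript seen by $\mathcal{A}$, while differing in the final answer, is the crux; everything else is the counting estimate $\binom{T}{2} < 2^n - 1$.
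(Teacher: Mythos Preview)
Your proposal is correct and follows essentially the same approach as the paper's proof. Both arguments run the deterministic algorithm against an adversary that always returns distinct values, observe that the set of pairwise XORs of the queried points (the paper calls this the \emph{covering set}) has size at most $\binom{T}{2}$, and conclude that if $\binom{T}{2} < 2^n - 1$ one can pick a hidden shift $s^\star$ outside this set and complete to a valid Simon instance on which the algorithm errs; your write-up is somewhat more explicit about adaptivity and about the completion step, but the underlying idea is identical.
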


\begin{proof}
For classical computers, it is easy to see that any algorithm solving Simon's problem requires
explicitly querying a pair of different $n$-bit strings $x,y\in \{0,1\}^{n}$ satisfying $f(x)=f(y)$.
Because we can not obtain any information of $s$ from the unmatched pair, the theorem can be
reformulated as follow: To find $s$, how many $n$-bit strings should we query at least?

Let $\mathcal{A}$ be an arbitrary deterministic algorithm that solves Simon's problem and
let $Y=\{y^{(1)},y^{(2)},\cdots,y^{(k)}\}$ be the \emph{query set} containing the $n$-bit strings queried
by $\mathcal{A}$. So, given an input function $f$ with a promised $s$, there exits two different $n$-bit
strings $a,b\in Y$ satisfying $f(a)=f(b)$. Equivalently, there exist two different $n$-bit strings
$a,b\in Y$, $s=a\oplus b$.

Consider the \emph{covering set} $S=\{a\oplus b\mid a,b\in Y\}$ generated by $Y$. Next, we will show that
$\{0,1\}^{n}\subseteq S$. It is trivial to see that $0^{n}\in S$ if $Y$ is nonempty. Suppose that there exits an $\tilde{s}\in \{0,1\}^{n}\setminus \{0^{n}\}$ such that $\tilde{s}\notin S$,
we can construct a function $\tilde{f}: \{0,1\}^{n}\to \{0,1\}^{m}$ that $\tilde{f}(x)=\tilde{f}(y)$ if
and only if $x=y$ or $x=y\oplus \tilde{s}$. Of course, the function $\tilde{f}$ is a feasible input to the
algorithm $\mathcal{A}$. So, given $\tilde{f}$ as input, $\mathcal{A}$ must output $\tilde{s}$. However, since
$\tilde{s}\notin S$, $\mathcal{A}$ would be failed to output $\tilde{s}$, that is a contradiction.

We discuss the relationship between $|Y|$ and $|S|$. As $|S|$ is at most $C_{|Y|}^{2}$ and $|S|\geq 2^{n}$
since $\{0,1\}^{n}\subseteq S$, simple calculation shows that $|Y|=\Omega({\sqrt{2^{n}}})$,
which is the lower bound of the cardinality of the query set.
\end{proof}

This lower bound proof also gives us some insights of the upper bound proof. To design an efficient algorithm,
one should find a way to construct a query set whose cardinality is as small as possible but not smaller.
The relationship between query set's and covering set's cardinality should be examined in detail.

\subsection{The upper bound}
First, let us introduce some useful definitions.

\begin{mydef}
Let an $n$-bit string $x\in \{0,1\}^{n}$ with $x=x_{n}x_{n-1}\cdots x_{1}$. The \emph{most significant bit}
of $x$ is $m$ if and only if $x_m=1$ and $x_{i}=0$ for all $i>m$, which is denoted as $MSB(x)=m$. We specify
that $MSB(0^{n})=0$.
\end{mydef}

\begin{rmk}
The definition above is related to the exponent of the scientific notation.
We can view an $n$-bit strings $x$ as an integer in a binary form and write it in a standard form.
Then if the exponent of the integer $x$ is $k$, $MSB(x)=k+1$. In the rest of the paper, we will also
write $n$-bit strings in integer form, for example, $2^{k}$ is equivalent to
$\underbrace{0\cdots 0}_{n-k-1}1\underbrace{0\cdots 0}_{k}$. Also notice that $MSB(2^{k})=k+1$.
\end{rmk}

\begin{mydef}
Let $Y$ be a set containing some $n$-bit strings and let $S=\{a\oplus b\mid a,b\in Y\}$
be its covering set.
If there exists an $m\leq n$ that $0^{n-m}\{0,1\}^{m}\subseteq S$, then $Y$ is called as \emph{$m$-significance}.
Moreover, if $S=0^{n-m}\{0,1\}^{m}$, then $Y$ is \emph{strictly $m$-significance}.
\end{mydef}

As we can see from the lower bound proof, the deterministic algorithm solving Simon's problem
should generate a query set whose covering set contains every $n$-bit string in $\{0,1\}^{n}$,
which means that the query set is $n$-significance.
So, the algorithm may contain the following steps:
\begin{enumerate}[Step 1.]
\item Generate an $n$-significance query set $Y$.
\item Query every $n$-bit string in $Y$ until there are two different $n$-bits strings $a,b\in Y$
satisfying $f(a)=f(b)$.
\item Output $s=a\oplus b$.
\end{enumerate}

The crucial step of the algorithm above is generating a query set whose cardinality is as small as possible.
As we can see, the complexity of the Simon's problem scales with the bits string length $n$. Therefore it is natural to design an algorithm generating the query set for larger $n$ based on the output of lesser $n$.
It is not hard to see that this strategy is equivalent to generating a larger significance query set from
lesser significance query set. The Algorithm \ref{alg:qsg1} utilizes a simple method to accomplish this goal.

\begin{algorithm}
\caption{\strut Preliminary version of algorithm generating query set}
\label{alg:qsg1}
\begin{algorithmic}[1]
\Require $n>1$ \strut
\Ensure $Y$ is $n$-significance
\Procedure{QuerySetGenerator}{integer $n$}
\State $Y\gets \{0^{n}, 0^{n-1}1\}$
  \For{$k\gets 1:n-1$}
    \State $Z\gets \emptyset$
      \ForAll{$y\in Y$ and $MSB(y)<k$}
        \State $Z\gets Z\cup \{2^{k}\oplus y\}$
      \EndFor
    \State $Y\gets Y\cup Z$
  \EndFor
\State \Return $Y$
\EndProcedure \strut
\end{algorithmic}
\end{algorithm}

Before we prove the correctness of the algorithm, let us define some notations.
Let $Y^{(0)}$ denote the set $Y$ in line 2, and let $Y^{(1)}, Y^{(2)}, \cdots, Y^{(n-1)}$ denote
set $Y$ in the left-hand side of line 8 for each loop.
Similarly, $Z^{(1)}, Z^{(2)}, \cdots, Z^{(n-1)}$ also denote the set $Z$ of line 8 for each loop.
It is not hard to show the following invariants for Algorithm \ref{alg:qsg1}.
\begin{equation}\label{eq:loopinv1}
\max_{y\in Y^{(k)}}(MSB(y))=k+1\text{ for all } k\geq 0,
\end{equation}
\begin{equation}\label{eq:loopinv7}
MSB(z)=k+1\text{ for all } z\in Z^{(k)} \text{ and } k\geq 1,
\end{equation}
\begin{equation}\label{eq:loopinv2}
Z^{(k)}=\{2^{k}\oplus y\mid y\in Y^{(k-2)}\}\text{ for all } k\geq 2.
\end{equation}
Equation \ref{eq:loopinv2} can be easily derived from Equation \ref{eq:loopinv1} and Equation \ref{eq:loopinv7}.

For the correctness of the algorithm, $Y^{(n-1)}$ must be $n$-significance.
However, in Algorithm \ref{alg:qsg1}, each $Y^{(k)}$ satisfies a stronger condition.

\begin{thm}\label{thm:alg1}
In Algorithm \ref{alg:qsg1}, $Y^{(k)}$ is $k$+1-significance for all $k\geq 0$.
\end{thm}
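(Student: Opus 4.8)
The plan is to prove by induction on $k$ that $Y^{(k)}$ is $(k{+}1)$-significance, i.e. that its covering set $S^{(k)} = \{a \oplus b \mid a,b \in Y^{(k)}\}$ contains the entire block $0^{n-k-1}\{0,1\}^{k+1}$. The base case $k = 0$ is immediate: $Y^{(0)} = \{0^n, 0^{n-1}1\}$, so $S^{(0)} = \{0^n, 0^{n-1}1\} = 0^{n-1}\{0,1\}^1$, which is exactly $1$-significance (in fact strictly so). For the inductive step, I assume $0^{n-k}\{0,1\}^{k} \subseteq S^{(k-1)}$ and aim to show $0^{n-k-1}\{0,1\}^{k+1} \subseteq S^{(k)}$.

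The key observation for the inductive step is a case split on the most significant bit of the target string $t \in 0^{n-k-1}\{0,1\}^{k+1}$. If $MSB(t) \le k$, then $t \in 0^{n-k}\{0,1\}^{k}$, and since $Y^{(k-1)} \subseteq Y^{(k)}$ (the loop only adds elements), the inductive hypothesis gives $t \in S^{(k-1)} \subseteq S^{(k)}$. The interesting case is $MSB(t) = k+1$, i.e. the leading bit of $t$ is the $2^k$ bit. Write $t = 2^k \oplus t'$ where $t' \in 0^{n-k}\{0,1\}^{k}$ has $MSB(t') \le k$ (possibly $t' = 0^n$). By the inductive hypothesis, $t' = a \oplus b$ for some $a, b \in Y^{(k-1)}$. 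I want to promote one of $a, b$ into a newly-added element of $Z^{(k)}$. Here I would invoke loop invariant \eqref{eq:loopinv1}: $\max_{y \in Y^{(k-1)}}(MSB(y)) = k$, so among all representations $t' = a \oplus b$ I can choose one where at least one of $a, b$ — say $b$ — satisfies $MSB(b) < k$ (this needs a short argument: since $t'$ has $MSB \le k$, the two elements $a,b$ agree on bits above position $k$... actually more care is needed, see below). Given such a $b$ with $MSB(b) < k$, the loop at iteration $k$ adds $2^k \oplus b$ to $Z^{(k)}$, hence to $Y^{(k)}$. Then $(2^k \oplus b) \oplus a = 2^k \oplus (a \oplus b) = 2^k \oplus t' = t \in S^{(k)}$, completing the step.

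The main obstacle is precisely the claim that the target $t'$ with $MSB(t') \le k$ admits a representation $t' = a \oplus b$ with $a, b \in Y^{(k-1)}$ and $\min(MSB(a), MSB(b)) < k$. The subtlety: a priori the inductive hypothesis only hands us \emph{some} pair $a,b$, and both could have $MSB$ exactly $k$. To handle this I would argue as follows. If both $a$ and $b$ have $MSB = k$, then bit $k$ of $a$ and of $b$ are both $1$, so bit $k$ of $a \oplus b = t'$ is $0$, consistent with $MSB(t') \le k$ only if in fact $MSB(t') < k$. In that sub-case I need a representation of $t'$ inside the smaller block $0^{n-k+1}\{0,1\}^{k-1}$, which I get from the \emph{previous} level of the induction (or by strengthening the inductive statement to "strictly $(k{+}1)$-significance," using \eqref{eq:loopinv2} to track that $Y^{(k)} \subseteq 0^{n-k-1}\{0,1\}^{k+1}$ so that $S^{(k)} \subseteq 0^{n-k-1}\{0,1\}^{k+1}$ too). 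The cleanest route is to strengthen the theorem to "$Y^{(k)}$ is \emph{strictly} $(k{+}1)$-significance" and prove that by induction; then $S^{(k-1)} = 0^{n-k}\{0,1\}^{k}$ exactly, and elements of $Y^{(k-1)}$ all lie in $0^{n-k}\{0,1\}^{k}$, so any $a \in Y^{(k-1)}$ already has $MSB(a) \le k$ — and a counting/parity argument on bit $k$ shows that for each $t'$ we can pick a representation with $b$ satisfying $MSB(b) < k$, because the elements of $Y^{(k-1)}$ with $MSB < k$ already form a set whose covering set is $0^{n-k+1}\{0,1\}^{k-1}$ by the lower level. I expect the bookkeeping of which invariant \eqref{eq:loopinv1}, \eqref{eq:loopinv7}, \eqref{eq:loopinv2} to cite at each point to be the fiddly part, but no step requires more than elementary manipulation of $\oplus$ and $MSB$.
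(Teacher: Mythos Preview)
Your proposal is correct and follows essentially the same route as the paper's proof. Both arguments proceed by (strong) induction, dispatch targets $t$ with $MSB(t)\le k$ via $Y^{(k-1)}\subseteq Y^{(k)}$, and for $MSB(t)=k+1$ write $t=2^k\oplus t'$ and seek a representation $t'=a\oplus b$ in $Y^{(k-1)}$ with $MSB(b)<k$, so that $2^k\oplus b\in Z^{(k)}$ yields $t\in S^{(k)}$. The ``main obstacle'' you isolate---that one might have $MSB(a)=MSB(b)=k$---is exactly what the paper packages as a separate lemma (Lemma~\ref{lem:lem1}): whenever $Y$ is $k$-significance with all elements of $MSB\le k$, every string of $MSB=k$ already has a cross-representation $a\oplus b$ with $MSB(a)=k$, $MSB(b)<k$, by parity on bit~$k$. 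Your case split on $MSB(t')=k$ versus $MSB(t')<k$ (the latter handled by dropping to level $k-2$ via strong induction) mirrors the paper's decomposition into the sets $S'$ and $S''$ using invariant~\eqref{eq:loopinv2}, which identifies $\{y\in Y^{(k-1)}:MSB(y)<k\}$ with $Y^{(k-2)}$. One remark: your suggested alternative of strengthening the inductive statement to \emph{strict} $(k{+}1)$-significance is not needed---the paper proves strictness only afterward as a corollary, and your own two-case argument already closes the gap without it.
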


\begin{proof}
We prove by induction on $k$. It is easy to see that $Y^{(0)}$ is 1-significance and $Y^{(1)}$ is 2-significance
(Notice that $Y^{(0)}=\{0^{n},0^{n-1}1\}$ and $Y^{(1)}=Y^{(0)}\cup \{0^{n-2}10\}$).
Assume there is  a $j>1$ satisfying that $Y^{(l)}$ is $l$+1-significance for all $l\leq j$. Next, we show that
$Y^{(j+1)}$ is $j$+2-significance.

Let $S$ be the covering set of $Y^{(j+1)}$. In order to show $Y^{(j+1)}$ is $j$+2-significance,
$0^{n-j-2}\{0,1\}^{j+2}\subseteq S$ must be satisfied. By line 8, $Y^{(j)}\subseteq Y^{(j+1)}$,
which means $0^{n-j-2}0\{0,1\}^{j+1}\subseteq S$. To prove the theorem,
we only need to show that $0^{n-j-2}1\{0,1\}^{j+1}\subseteq S$.

Also by line 8, we can see that $Y^{(j+1)}=Y^{(j)}\cup Z^{(j+1)}=Y^{(j-1)}\cup Z^{(j)}\cup Z^{(j+1)}$.
Let set $S'=\{a\oplus b\mid a\in Z^{(j+1)}, b\in Y^{(j-1)}\}$ and set
$S''=\{a\oplus b\mid a\in Z^{(j+1)}, b\in Z^{(j)}\}$. Notice that $S'\subseteq S$ and $S''\subseteq S$.

Consider the set $S'$ first. By Equation \ref{eq:loopinv2},
$S'=\{2^{j+1}\oplus (y\oplus b)\mid y,b\in Y^{(j-1)}\}$, and it is clear to see that
$0^{n-j-2}10\{0,1\}^{j}\subseteq S'$.

As for the set $S''$, also by Equation \ref{eq:loopinv2},
$S''=\{2^{j+1}\oplus (y\oplus b)\mid y\in Y^{(j-1)}, b\in Z^{(j)}\}$.
Consider the set $X=\{y\oplus b\mid y\in Y^{(j-1)}, b\in Z^{(j)}\}$.
Notice that $Y^{(j)}=Y^{(j-1)}\cup Z^{(j)}$. With Equation \ref{eq:loopinv1} and Equation \ref{eq:loopinv7},
the forthcoming Lemma \ref{lem:lem1} shows that $0^{n-j-1}1\{0,1\}^{j}\subseteq X$. Therefore,
it is clear to see that $0^{n-j-2}11\{0,1\}^{j}\subseteq S''$.

Because $S'$ and $S''$ are subsets of $S$, $0^{n-j-2}1\{0,1\}^{j+1}\subseteq S$ is proved, which leads to the
result that $Y^{(j+1)}$ is $j$+2-significance.
\end{proof}

The proof of Theorem \ref{thm:alg1} relies on the following lemma.

\begin{lem}\label{lem:lem1}
Let the set $Y$ be a set containing some $n$-bit strings. If $Y$ fulfils the following conditions:
\begin{enumerate}[(1)]
\item $MSB(y)\leq k$ for all $y\in Y$,
\item $Y$ is $k$-significance,
\end{enumerate}
then by denoting $A=\{a\mid a\in Y, MSB(a)=k\}$, $B=\{b\mid b\in Y, MSB(b)<k\}$, we have that
set $X=\{a\oplus b\mid a\in A, b\in B\}$ satisfies $X\supseteq 0^{n-k}1\{0,1\}^{k-1}$.
\end{lem}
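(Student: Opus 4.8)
The plan is to use the bit in position $k$ as a separating coordinate for the XOR structure. Writing $S=\{y\oplus y'\mid y,y'\in Y\}$ for the covering set of $Y$, I first record the dichotomy forced by condition (1): for $y\in Y$ the $k$-th bit of $y$ equals $1$ exactly when $MSB(y)=k$ (i.e. $y\in A$), and equals $0$ exactly when $MSB(y)<k$ (i.e. $y\in B$), because a string with $MSB\le k$ has $k$-th bit $1$ if and only if its most significant bit is precisely $k$. In particular $Y$ is the disjoint union $A\cup B$. Since the $k$-th bit is additive modulo $2$ under $\oplus$, the string $y\oplus y'$ has $k$-th bit $1$ precisely when exactly one of $y,y'$ lies in $A$ and the other in $B$. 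Hence the set of elements of $S$ whose $k$-th bit is $1$ is exactly $X=\{a\oplus b\mid a\in A,\ b\in B\}$.

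Next I would invoke condition (2). Since $Y$ is $k$-significance, $0^{n-k}\{0,1\}^{k}\subseteq S$. The strings in $0^{n-k}1\{0,1\}^{k-1}$ are precisely the members of $0^{n-k}\{0,1\}^{k}$ whose $k$-th bit equals $1$ (with bits $k+1,\dots,n$ all $0$), so each such string lies in $S$ and has $k$-th bit $1$. By the first paragraph every such string is therefore of the form $a\oplus b$ with $a\in A$ and $b\in B$, i.e. it lies in $X$. This yields $0^{n-k}1\{0,1\}^{k-1}\subseteq X$, which is the assertion of the lemma.

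I do not anticipate a genuine obstacle: the whole argument is the observation that $\oplus$ acts additively on the $k$-th coordinate and that $A$ and $B$ are the ``$1$-layer'' and ``$0$-layer'' of $Y$ in that coordinate, after which condition (2) is applied only to those target strings whose $k$-th bit is $1$. The one point requiring care is the bookkeeping around the definition of $MSB$ — in particular verifying that, under the hypothesis $MSB(y)\le k$, having $k$-th bit $1$ is equivalent to $MSB(y)=k$ and having $k$-th bit $0$ is equivalent to $MSB(y)<k$ — but this is immediate from the definition, and it is the only place the hypothesis (1) is used.
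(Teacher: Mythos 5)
Your proof is correct and follows essentially the same route as the paper's: both arguments split $Y$ into the layers $A$ (with $k$-th bit $1$) and $B$ (with $k$-th bit $0$) and observe that any element of the covering set with $k$-th bit $1$ must arise as $a\oplus b$ with $a\in A$, $b\in B$, then apply $k$-significance to the target strings. The paper phrases this as a proof by contradiction via the MSB of $A\oplus A$ and $B\oplus B$ combinations being below $k$, while you argue directly through additivity of the $k$-th coordinate, which is only a cosmetic difference.
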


\begin{proof}
Notice that the set $X$ is a subset of the covering set $S$ of $Y$. To complete the proof, we must show that for every $s\in S$
with $s\in 0^{n-k}1\{0,1\}^{k-1}$, $s$ must be in $X$. Suppose a contradiction that there exists an $\tilde{s}\in 0^{n-k}1\{0,1\}^{k-1}$
that $\tilde{s}\notin X$. Then $\tilde{s}$ must be in $S\setminus X$. Notice that $Y=A\cup B$. Then $S\setminus X=C\cup D$,
with $C=\{x\oplus y\mid x, y\in A\}$ and $D=\{x\oplus y\mid x, y\in B\}$. Since $MSB(s)<k$ for all $s\in C$ or $s\in D$ and
$MSB(\tilde{s})=k$, that is a contradiction.
\end{proof}

The correctness of the Algorithm \ref{alg:qsg1} is a direct result of Theorem \ref{thm:alg1}.
\begin{thm}\label{thm:correct1}
Algorithm \ref{alg:qsg1} correctly constructs an $n$-significance query set with input $n$.
\end{thm}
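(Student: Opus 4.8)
The plan is to derive Theorem~\ref{thm:correct1} as an immediate corollary of Theorem~\ref{thm:alg1}; the substantive work has already been done, so what remains is a short bookkeeping argument.

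First I would pin down what Algorithm~\ref{alg:qsg1} actually returns. Under the precondition $n>1$, the for loop runs over $k=1,2,\ldots,n-1$ (in particular it executes at least once), each iteration performing only finitely many set operations, and line~8 on the last pass assigns to $Y$ exactly the set named $Y^{(n-1)}$; no further change to $Y$ occurs before the return on line~10. Hence the algorithm terminates, and its output is $Y^{(n-1)}$.

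Next I would apply Theorem~\ref{thm:alg1} with $k=n-1$, which states that $Y^{(n-1)}$ is $(n-1)+1 = n$-significance. Unwinding the definition of $m$-significance with $m=n$ (the prefix of zeros being empty): writing $S=\{a\oplus b \mid a,b\in Y^{(n-1)}\}$ for the covering set, $n$-significance says $\{0,1\}^{n}\subseteq S$, i.e. the covering set of the returned query set is all of $\{0,1\}^{n}$. This is precisely the $n$-significance property that, by the discussion following the lower-bound proof, a query set must possess in order to drive a deterministic algorithm for Simon's problem, so the theorem follows.

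There is essentially no obstacle at this level of the argument: all the real content lives in the loop invariants \ref{eq:loopinv1}--\ref{eq:loopinv2}, the induction in Theorem~\ref{thm:alg1}, and the covering-set counting of Lemma~\ref{lem:lem1}. The only points needing care are the off-by-one in the loop range --- so that the final set is $Y^{(n-1)}$, which Theorem~\ref{thm:alg1} certifies to be $n$-significance, rather than $Y^{(n-2)}$ or a nonexistent $Y^{(n)}$ --- and the (trivial) termination of a fixed-length loop.
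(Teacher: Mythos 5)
Your proposal is correct and matches the paper, which simply notes that this theorem is a direct consequence of Theorem~\ref{thm:alg1}: the returned set is $Y^{(n-1)}$, and applying that theorem with $k=n-1$ gives $n$-significance. Your extra care about the loop range and termination is fine but adds nothing beyond the paper's one-line justification.
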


Although Theorem \ref{thm:alg1} is sufficient enough to prove the correctness of the Algorithm \ref{alg:qsg1},
we have to state that each $Y^{(k)}$ in Algorithm \ref{alg:qsg1} actually satisfies an even stronger condition.

\begin{col}
In Algorithm \ref{alg:qsg1}, $Y^{(k)}$ is strictly $k$+1-significance for all $k\geq 0$.
\end{col}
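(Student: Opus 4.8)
The plan is to obtain the claim by combining the already-proved Theorem \ref{thm:alg1} with the loop invariant recorded in Equation \ref{eq:loopinv1}. Writing $S$ for the covering set of $Y^{(k)}$, the statement that $Y^{(k)}$ is \emph{strictly} $k{+}1$-significance amounts to the single equality $S = 0^{n-k-1}\{0,1\}^{k+1}$. Theorem \ref{thm:alg1} already gives the inclusion $0^{n-k-1}\{0,1\}^{k+1}\subseteq S$ (that is exactly what ``$k{+}1$-significance'' means), so the only thing left to prove is the reverse inclusion $S\subseteq 0^{n-k-1}\{0,1\}^{k+1}$.

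For that reverse inclusion I would argue directly from Equation \ref{eq:loopinv1}, which states $\max_{y\in Y^{(k)}}(MSB(y))=k+1$. In particular every $y\in Y^{(k)}$ satisfies $MSB(y)\leq k+1$, i.e.\ $y$ has all of its top $n-k-1$ bits equal to $0$, so $y\in 0^{n-k-1}\{0,1\}^{k+1}$. Since the bitwise exclusive-or of two strings that are both zero in positions $>k+1$ is again zero in those positions, it follows that $a\oplus b\in 0^{n-k-1}\{0,1\}^{k+1}$ for every $a,b\in Y^{(k)}$, hence $S\subseteq 0^{n-k-1}\{0,1\}^{k+1}$. Combining the two inclusions gives $S=0^{n-k-1}\{0,1\}^{k+1}$ for all $k\geq 0$, which is precisely the assertion that $Y^{(k)}$ is strictly $k{+}1$-significance.

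I do not expect any genuine obstacle here: the result is essentially an immediate consequence of pairing the containment from Theorem \ref{thm:alg1} with the $MSB$ bound from Equation \ref{eq:loopinv1}. The only points requiring a little care are the bookkeeping of bit indices in the definition of $MSB$ (so that ``$MSB(y)\leq k+1$'' really translates into ``$y\in 0^{n-k-1}\{0,1\}^{k+1}$'') and the explicit identification of the ``strict'' condition with the conjunction of the two inclusions; the base case $k=0$, where $Y^{(0)}=\{0^{n},0^{n-1}1\}$ has covering set $0^{n-1}\{0,1\}^{1}$, is checked directly and poses no difficulty.
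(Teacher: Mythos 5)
Your proposal is correct and follows essentially the same route as the paper: the inclusion $0^{n-k-1}\{0,1\}^{k+1}\subseteq S$ comes from Theorem \ref{thm:alg1}, and the reverse inclusion comes from Equation \ref{eq:loopinv1} via the observation that $MSB(a\oplus b)\leq k+1$ for $a,b\in Y^{(k)}$. The paper states this XOR-closure step more tersely, but the argument is the same.
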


\begin{proof}
Let $S$ be the covering set of $Y^{(k)}$. By Equation \ref{eq:loopinv1}, it is easy to see that $MSB(s)\leq k+1$ for all $s\in S$,
which indicates that $S \subseteq 0^{n-k-1}\{0,1\}^{k+1}$. By Theorem \ref{thm:alg1}, $0^{n-k-1}\{0,1\}^{k+1} \subseteq S$.
Consequently $S=0^{n-k-1}\{0,1\}^{k+1}$, and this completes the proof.
\end{proof}

We have proved Theorem \ref{thm:correct1}, but there remains another question: is it an efficient one from the viewpoint of complexity? By Equation \ref{eq:loopinv2},
it is easy to see that $|Z^{(k)}|=|Y^{(k-2)}|$ for all $k\geq 2$, and therefore we can obtain a simple relationship that
$|Y^{(k)}|=|Y^{(k-1)}|+|Y^{(k-2)}|$. As a result, the cardinality of the query set scales as a Fibonacci sequence with increasing $n$.
Though this algorithm is more efficient than the naive one, it is not efficient enough to match the proved lower bound.
Notice that $|Z^{(k)}|$ also scales as a Fibonacci sequence. If we design a more efficient algorithm,
the number of new $n$-bit strings added to set $Y$ should be lesser. How about $|Z^{(k)}|\approx \sqrt{2}|Z^{(k-1)}|$?
Such an algorithm will be efficient enough, but the problem is to make it a correct one.
Algorithm \ref{alg:qsg2} accomplishes a similar goal.

\begin{algorithm}
\caption{\strut Final version of algorithm generating query set}
\label{alg:qsg2}
\begin{algorithmic}[1]
\Require $n>2$ \strut
\Ensure $Y$ is $n$-significance
\Procedure{QuerySetGenerator}{integer $n$}
\State $Y\gets \{0^{n}, 0^{n-1}1, 0^{n-2}10\}$
  \For{$k\gets 1:n-2$}
    \State $Z\gets \emptyset$
      \ForAll{$y\in Y$ and $MSB(y)=k$}
        \State $Z\gets Z\cup \{2^{k+1}\oplus y, 2^{k+1}\oplus 2^{k-1}\oplus y\}$
      \EndFor
    \State $Y\gets Y\cup Z$
  \EndFor
\State \Return $Y$
\EndProcedure \strut
\end{algorithmic}
\end{algorithm}

Just as Algorithm \ref{alg:qsg1}, the definitions of notation $Y^{(k)}$ and $Z^{(k)}$ are similar, only without $Y^{(n-1)}$ and $Z^{(n-1)}$
(the outer loop in Algorithm \ref{alg:qsg2} runs $n-2$ times). Also, it is not hard to see the following invariants for Algorithm \ref{alg:qsg2}.
\begin{equation}\label{eq:loopinv3}
\max_{y\in Y^{(k)}}(MSB(y))=k+2\text{ for all } k\geq 0,
\end{equation}
\begin{equation}\label{eq:loopinv6}
MSB(z)=k+2\text{ for all } z\in Z^{(k)} \text{ and } k\geq 1,
\end{equation}
\begin{equation}\label{eq:loopinv4}
Z^{(k)}=\{2^{k+1}\oplus z, 2^{k+1}\oplus 2^{k-1}\oplus z\mid z\in Z^{(k-2)}\} \text{ for all } k\geq 3,
\end{equation}
\begin{equation}\label{eq:loopinv5}
y_{k-1}=0\text{ for all }y\in Y \text{; }MSB(y)=k \text{ and } k\geq 2.
\end{equation}

Before discussing the correctness of Algorithm \ref{alg:qsg2}, we take a brief detour
to analyze its efficiency, which relates to the cardinality of the query set
constructed by Algorithm \ref{alg:qsg2}.

\begin{thm}\label{thm:qsgeff}
Algorithm \ref{alg:qsg2} constructs a query set whose cardinality is $O(\sqrt{2^{n}})$.
\end{thm}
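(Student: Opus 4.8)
The plan is to bound $|Y^{(n-2)}|$, the cardinality of the set returned by Algorithm \ref{alg:qsg2}, by controlling how quickly the increments $Z^{(k)}$ grow. Since line 8 only ever enlarges $Y$, we have
\[
|Y^{(n-2)}| = |Y^{(0)}| + \sum_{k=1}^{n-2} |Z^{(k)}|,
\]
so it suffices to show $|Z^{(k)}| = O(2^{k/2})$ and then sum the resulting (essentially geometric) series, which is dominated by its last term $2^{\,\lceil (n-2)/2\rceil} = O(\sqrt{2^n})$.

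To bound $|Z^{(k)}|$, first note from line 6 that each $y$ processed in the inner loop contributes at most two strings to $Z$, so $|Z^{(k)}| \le 2\,\big|\{y \in Y^{(k-1)} : MSB(y) = k\}\big|$. For the initial rounds $k = 1, 2$ the only strings of MSB $1$ or $2$ that can appear are the two seeds $0^{n-1}1$ and $0^{n-2}10$ of $Y^{(0)}$ (by Equation \ref{eq:loopinv6}, every string added to $Y$ in a later round has MSB at least $3$), so $|Z^{(1)}|, |Z^{(2)}| \le 2$. For $k \ge 3$, Equation \ref{eq:loopinv4} exhibits $Z^{(k)}$ as the image of $Z^{(k-2)}$ under a map sending each element to at most two strings, which yields the recurrence $|Z^{(k)}| \le 2\,|Z^{(k-2)}|$.

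Unrolling this recurrence along the even-index and odd-index subsequences of $\{Z^{(k)}\}$ separately (each chain seeded by a base case of size at most $2$) gives $|Z^{(k)}| \le 2^{\,\lceil k/2\rceil}$ for all $k \ge 1$. Substituting this into the displayed identity,
\[
|Y^{(n-2)}| \le 3 + \sum_{k=1}^{n-2} 2^{\,\lceil k/2\rceil},
\]
and bounding the right-hand sum by a constant multiple of its largest term $2^{\,\lceil (n-2)/2\rceil}$ completes the estimate $|Y^{(n-2)}| = O(\sqrt{2^n})$.

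The only point requiring genuine care is the index bookkeeping: because the recurrence in Equation \ref{eq:loopinv4} steps by $2$ rather than $1$, the even-index and odd-index chains of $\{Z^{(k)}\}$ must be tracked independently, and one must confirm that their seeds $Z^{(2)}$ and $Z^{(1)}$ have size at most $2$; the remaining manipulations are a routine geometric summation. Note also that exact cardinalities are never needed here, so possible collisions among the generated strings (which could only make the sets smaller) can be safely ignored.
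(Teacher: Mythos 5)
Your proposal is correct and follows essentially the same route as the paper: decompose $|Y^{(n-2)}|$ as $|Y^{(0)}|+\sum_{k=1}^{n-2}|Z^{(k)}|$, use the step-two recurrence from Equation \ref{eq:loopinv4} with the base cases $|Z^{(1)}|=|Z^{(2)}|=2$, and sum the resulting geometric series to get $O(\sqrt{2^{n}})$. The only (harmless) difference is that you work with upper bounds $|Z^{(k)}|\leq 2|Z^{(k-2)}|$ rather than the paper's exact equality $|Z^{(k+2)}|=2|Z^{(k)}|$, which if anything is the more careful choice since possible collisions only help the bound.
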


\begin{proof}
For a given input $n$, it is easy to see that the cardinality of the constructed query set is
$|Y|=|Y^{(n-2)}|=|Z^{(n-2)}|+|Z^{(n-1)}|+\cdots +|Z^{(1)}|+|Y^{(0)}|$. Moreover, by Equation \ref{eq:loopinv4},
$|Z^{(k+2)}|=2|Z^{(k)}|$. With the initial values $|Z^{(1)}|=2$, $|Z^{(2)}|=2$ and $|Y^{(0)}|=3$,
for a large enough $n$, simple calculation shows that:
\begin{equation*}
\begin{split}
|Y|&=\sum_{k=1}^{n-2}|Z^{(k)}|+|Y^{(0)}| \\
   &=\sum_{\text{odd } k}2^{\frac{k+1}{2}}+\sum_{\text{even }k}2^{\frac{k}{2}}+3 \\
   &=2(2^{\lceil \frac{n-2}{2} \rceil }-1)+
     2(2^{\lfloor \frac{n-2}{2} \rfloor }-1)+3 \\
   &=2^{\lceil \frac{n}{2} \rceil}+2^{\lfloor \frac{n}{2} \rfloor}-1 \\
   &=O(\sqrt{2^{n}}).
\end{split}
\end{equation*}
\end{proof}

\begin{rmk}
Though it is not well as the case of odd inputs, $|Y|\approx 2\sqrt{2^{n}}$ is still a good fit even with a large $n$.
This also implies that our algorithm is not only efficient in the asymptotic sense, but also not a bad one
considering the constant factor.
\end{rmk}

Next, we will show that Algorithm \ref{alg:qsg2} is a correct one. Similar to Algorithm \ref{alg:qsg1},
the correctness of the algorithm depends on some properties of $Y^{(k)}$.

\begin{thm}\label{thm:qsg2}
In Algorithm \ref{alg:qsg2}, $Y^{(k)}$ is $k$+2-significance for all $k\geq 0$.
\end{thm}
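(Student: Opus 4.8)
The plan is to proceed by induction on $k$, mirroring the structure of the proof of Theorem \ref{thm:alg1} but tracking the finer structure imposed by Equation \ref{eq:loopinv5}. The base cases $k=0,1,2$ can be checked directly from the explicit initial set $Y^{(0)}=\{0^n,0^{n-1}1,0^{n-2}10\}$ and one or two iterations of the loop: $Y^{(0)}$ is $2$-significance since its covering set is all of $0^{n-2}\{0,1\}^2$, and a short computation handles $Y^{(1)}$ and $Y^{(2)}$. For the inductive step, I would assume $Y^{(l)}$ is $l+2$-significance for all $l\le j$ (with $j$ large enough that the base cases are past) and show $Y^{(j+1)}$ is $j+3$-significance.

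Let $S$ be the covering set of $Y^{(j+1)}$. Since $Y^{(j)}\subseteq Y^{(j+1)}$, the inductive hypothesis already gives $0^{n-j-3}0\{0,1\}^{j+2}\subseteq S$, so it remains to produce $0^{n-j-3}1\{0,1\}^{j+2}\subseteq S$, i.e. every string whose most significant bit is exactly $j+3$. By line 8, $Y^{(j+1)}=Y^{(j)}\cup Z^{(j+1)}$, and by Equation \ref{eq:loopinv6} every $z\in Z^{(j+1)}$ has $MSB(z)=j+3$. Writing $z = 2^{j+2}\oplus 2^j \oplus y$ or $z=2^{j+2}\oplus y$ for the generating $y\in Y^{(j)}$ with $MSB(y)=j+1$, I would consider the two families of pairwise XORs: $S'=\{z\oplus y' \mid z\in Z^{(j+1)},\ y'\in Y^{(j)}\}$ (giving strings with $MSB = j+3$) and I would extract from $S'$, together with the freedom in the low $j$ bits of $y$ (which ranges over a $j$-significance set, by the analogue of Lemma \ref{lem:lem1} applied to the top-$MSB$ block of $Y^{(j)}$), the full block $0^{n-j-3}1\{0,1\}^{j+2}$. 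Concretely: fixing a $z\in Z^{(j+1)}$ and letting $y'$ range over $Y^{(j)}$ produces $z$ XOR-ed with everything in $0^{n-j-2}\{0,1\}^{j+2}$, which sweeps out $0^{n-j-3}1\{0,1\}^{j+2}$ once one checks that the bit pattern of $z$ above position $j+2$ is exactly the single $1$ in position $j+3$ — and this is precisely what Equations \ref{eq:loopinv5} and \ref{eq:loopinv6} guarantee.

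The main obstacle I anticipate is bookkeeping the interaction between the two terms $2^{k+1}\oplus y$ and $2^{k+1}\oplus 2^{k-1}\oplus y$ added in line 6: unlike Algorithm \ref{alg:qsg1}, here only strings with $MSB$ exactly $k$ spawn new elements, and each spawns two, so one must verify that the covering-set contributions of $Z^{(j+1)}$ XOR-ed against $Y^{(j)}$ (rather than against $Y^{(j-1)}$ as in the earlier proof) genuinely cover the bit in position $j+1$ — this is where Equation \ref{eq:loopinv5}, stating $y_{k-1}=0$ whenever $MSB(y)=k$, does the crucial work, since it is what makes the XOR of the two spawned strings equal to $2^{k-1}$, a "carry-free" witness for the low bits. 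I would isolate that computation as the heart of the argument, then assemble $S'\subseteq S$ to conclude $0^{n-j-3}1\{0,1\}^{j+2}\subseteq S$, hence $0^{n-j-3}\{0,1\}^{j+3}\subseteq S$, i.e. $Y^{(j+1)}$ is $j+3$-significance, completing the induction.
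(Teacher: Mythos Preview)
Your outline matches the paper up to and including the reduction: induct on $k$, handle $k=0,1,2$ by hand, and for the step reduce to showing $0^{n-j-3}1\{0,1\}^{j+2}\subseteq S$, where $S$ is the covering set of $Y^{(j+1)}$. The gap is in how you propose to establish that inclusion.

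Your ``Concretely'' sentence is false: fixing one $z\in Z^{(j+1)}$ and letting $y'$ range over $Y^{(j)}$ does \emph{not} produce $z$ XOR-ed with all of $0^{n-j-2}\{0,1\}^{j+2}$, because $Y^{(j)}$ has only $O(\sqrt{2^{j}})$ elements, not $2^{j+2}$. The inductive hypothesis says the \emph{covering set} of $Y^{(j)}$ (pairwise XORs) is all of $0^{n-j-2}\{0,1\}^{j+2}$; it does not say $Y^{(j)}$ itself is. So a single $z$ XOR-ed against $Y^{(j)}$ cannot sweep out the whole target block; you must let \emph{both} sides vary and control which bit patterns in positions $j+2,j+1$ are hit.

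This is exactly where the paper does more work than you anticipate. It splits $Z^{(j+1)}=A\cup B$ (the two spawn types) and $Y^{(j)}=Y^{(j-2)}\cup Z^{(j-1)}\cup Z^{(j)}$, then forms four cross-XOR sets covering the four patterns $0^{n-j-3}1ab\{0,1\}^{j}$ for $ab\in\{00,01,10,11\}$. Two of these (patterns $01$ and $00$) come from $A,B$ against $Y^{(j-2)}$ via Lemma~\ref{lem:lem1}. The other two (patterns $11$ and $10$) come from $A,B$ against $Z^{(j)}$, and require a \emph{new} lemma (the paper's Lemma~\ref{lem:lem2}): under the extra hypothesis $y_{k-1}=0$ for $MSB(y)=k$, XOR-ing the $MSB=k$ block against the $MSB=k-1$ block of a $k$-significance set covers $0^{n-k}11\{0,1\}^{k-2}$. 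This is where Equation~\ref{eq:loopinv5} is actually used --- not to make the XOR of the two spawn types equal $2^{k-1}$ (that identity is trivial and needs nothing about $y$), but to force the bit at position $k-1$ to be zero in the top-MSB block so that the only way to realize an XOR with bits $11$ in positions $k,k-1$ is to pair a $MSB=k$ element with a $MSB=k-1$ element. Your proposal does not isolate this mechanism, and without it the argument for the $11$ and $10$ patterns does not go through.
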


\begin{proof}
We proceed with induction on $k$. It is simple to examine the cases that $k=0,1,2$
(Notice that $Y^{(0)}=\{0^{n},0^{n-1}1,0^{n-2}10\}$, $Y^{(1)}=Y^{(0)}\cup \{0^{n-3}101,0^{n-3}100\}$ and
$Y^{(2)}=Y^{(1)}\cup \{0^{n-4}1010,0^{n-4}1000\}$).
Suppose for a $j>2$ that $Y^{(l)}$ is $l$+2-significance for all $l\leq j$.
Next we will show that $Y^{(j+1)}$ is $j$+3-significance.

Let $S$ be the covering set of $Y^{(j+1)}$. Since $Y^{(j+1)}$ is $j$+2-significance by line 8,
we only need to show that
$0^{n-j-3}1\{0,1\}^{j+2}\subseteq S$ so  as to show that $Y^{(j+1)}$ is also $j$+3-significance.

Also by line 8, we can see that $Y^{(j+1)}=Y^{(j)}\cup Z^{(j+1)}=Y^{(j-2)}\cup Z^{(j-1)}\cup
Z^{(j)}\cup Z^{(j+1)}$. Let set $A=\{2^{j+2}\oplus a\mid a\in Y^{(j)}, MSB(a)=j+1\}$,
set $B=\{2^{j+2}\oplus 2^{j}\oplus b\mid b\in Y^{(j)}, MSB(b)=j+1\}$, and let sets $S^{(0)}$, $S^{(1)}$,
$S^{(2)}$, $S^{(3)}$ be defined as below. Notice that $Z^{(j+1)}=A\cup B$ and $S^{(0)}$, $S^{(1)}$,
$S^{(2)}$, $S^{(3)}$ are subsets of $S$.
\begin{align*}
S^{(0)}&=\{a\oplus y\mid a\in A, y\in Y^{(j-2)}\}, &
S^{(1)}&=\{b\oplus y\mid b\in B, y\in Y^{(j-2)}\}, \\
S^{(2)}&=\{a\oplus z\mid a\in A, z\in Z^{(j)}\}, &
S^{(3)}&=\{b\oplus z\mid b\in B, z\in Z^{(j)}\}.
\end{align*}

Consider the sets $S^{(0)}$ and $S^{(1)}$ first. By Equation \ref{eq:loopinv4}, it is easy to see that
$S^{(0)}=\{2^{j+2}\oplus (z\oplus y)\mid z\in Z^{(j-1)}, y\in Y^{(j-2)}\}$ and
$S^{(1)}=\{2^{j+2}\oplus 2^{j}\oplus (z\oplus y)\mid z\in Z^{(j-1)}, y\in Y^{(j-2)}\}$.
Let set $X=\{z\oplus y\mid z\in Z^{(j-1)}, y\in Y^{(j-2)}\}$. With Equation \ref{eq:loopinv3}, Equation \ref{eq:loopinv6}
and the fact that $Y^{(j-1)}=Y^{(j-2)}\cup Z^{(j-1)}$, Lemma \ref{lem:lem1} shows that
$0^{n-j-1}1\{0,1\}^{j}\subseteq X$. It is clear to see $0^{n-j-3}101\{0,1\}^{j}\subseteq S^{(0)}$ and
$0^{n-j-3}100\{0,1\}^{j}\subseteq S^{(1)}$.

Next, consider the sets $S^{(2)}$ and set $S^{(3)}$. Also by Equation \ref{eq:loopinv4},
$S^{(2)}=\{2^{j+2}\oplus (w\oplus z)\mid w\in Z^{(j-1)}, z\in Z^{(j)}\}$ and
$S^{(3)}=\{2^{j+2}\oplus 2^{j}\oplus (w\oplus z)\mid w\in Z^{(j-1)}, z\in Z^{(j)}\}$.
Let set $U=\{w\oplus z\mid w\in Z^{(j-1)}, z\in Z^{(j)}\}$. With Equation \ref{eq:loopinv6},
Equation \ref{eq:loopinv5} and the fact that $Z^{(j-1)}\subseteq Y^{(j)}$ and $Z^{(j)}\subseteq Y^{(j)}$,
the forthcoming Lemma \ref{lem:lem2} shows that $0^{n-j-2}11\{0,1\}^{j}\subseteq U$.
It is clear to see $0^{n-j-3}111\{0,1\}^{j}\subseteq S^{(2)}$ and $0^{n-j-3}110\{0,1\}^{j}\subseteq S^{(3)}$.

Collecting the previous results, we can see that $S^{(0)}\cup S^{(1)}\cup S^{(2)}\cup S^{(3)}\supseteq
0^{n-j-3}1\{0,1\}^{j+2}$. Thus $Y^{(j+1)}$ is $j$+3-significance.
\end{proof}

To complete the proof of Theorem \ref{thm:qsg2}, we have to prove the following lemma.

\begin{lem}\label{lem:lem2}
Let set $Y$ be a set containing some $n$-bit strings. If $Y$ fulfils the following conditions:
\begin{enumerate}[(1)]
\item $MSB(y)\leq k$ for all $y\in Y$,
\item $Y$ is $k$-significance,
\item $y_{k-1}=0$ for all $y\in Y$ and $MSB(y)=k$,
\end{enumerate}
then by denoting set $D=\{d\mid d\in Y, MSB(d)=k\}$, $E=\{e\mid e\in Y, MSB(e)=k-1\}$, we have that the
set $U=\{d\oplus e\mid d\in D, e\in E\}$ satisfies $U\supseteq 0^{n-k}11\{0,1\}^{k-2}$.
\end{lem}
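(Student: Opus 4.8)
The plan is to follow the template of the proof of Lemma~\ref{lem:lem1}: show that the set $U$ already contains every element of the covering set $S=\{a\oplus b\mid a,b\in Y\}$ that happens to lie in the target block $0^{n-k}11\{0,1\}^{k-2}$, and then use hypothesis~(2) to guarantee that this block is indeed contained in $S$. The new ingredient compared with Lemma~\ref{lem:lem1} is that one must look at the two leading bit positions $k$ and $k-1$ simultaneously, and this is exactly where hypothesis~(3) enters.

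I would first fix an arbitrary $s\in 0^{n-k}11\{0,1\}^{k-2}$. Since $Y$ is $k$-significance, $0^{n-k}\{0,1\}^{k}\subseteq S$, so $s\in S$ and we may write $s=a\oplus b$ with $a,b\in Y$. Now I read off the top two bits. By hypothesis~(1) every $y\in Y$ has $MSB(y)\le k$, so bit $k$ of $a$ (resp.\ of $b$) equals $1$ if and only if $a\in D$ (resp.\ $b\in D$); as bit $k$ of $s$ is $1$, exactly one of $a,b$ lies in $D$, and since $a\oplus b=b\oplus a$ we may assume $a\in D$ and $b\notin D$. By hypothesis~(3), bit $k-1$ of $a$ is $0$, hence bit $k-1$ of $b$ equals bit $k-1$ of $s$, which is $1$. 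Because $b\notin D$ we have $MSB(b)\le k-1$, and bit $k-1$ of $b$ being $1$ forces $MSB(b)=k-1$, i.e.\ $b\in E$. Therefore $s=a\oplus b\in U$, which is the required inclusion $U\supseteq 0^{n-k}11\{0,1\}^{k-2}$.

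I do not expect a genuine obstacle: the whole argument is a short case analysis on the two leading bit positions. The only place demanding care is the ``without loss of generality'' step — it is legitimate precisely because $a\oplus b$ is symmetric in $a$ and $b$ and exactly one of the two summands lies in $D$ — together with the correct use of hypothesis~(3), which is what excludes the degenerate possibility that the bit-$(k-1)$ contribution would have to be supplied by the $D$-summand. One may also note, as a sanity check, the reverse containment $U\subseteq 0^{n-k}11\{0,1\}^{k-2}$, which follows from the same bit bookkeeping ($MSB(d)=k$ and hypothesis~(3) give bits $k,k-1$ of $d$ equal to $1,0$; $MSB(e)=k-1$ gives bits $k,k-1$ of $e$ equal to $0,1$), so the lemma in fact identifies $U$ exactly, though only the stated inclusion is needed in the proof of Theorem~\ref{thm:qsg2}.
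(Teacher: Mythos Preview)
Your proof is correct and follows essentially the same approach as the paper's: both use $k$-significance to place $s$ in the covering set, then perform a case analysis on bits $k$ and $k-1$ (using hypothesis~(3) for the latter) to force the decomposition into $D\times E$. The only cosmetic difference is that the paper phrases the same bit bookkeeping as a proof by contradiction---partitioning $Y=D\cup E\cup W$ with $W=\{w\in Y\mid MSB(w)<k-1\}$ and ruling out $V=\{d\oplus w\mid d\in D,\,w\in W\}$ via $v_{k-1}=0$---whereas you argue directly; your version is arguably cleaner.
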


\begin{proof}
Notice that set $U$ is also a subset of the covering set $S$ of $Y$. Suppose a contradiction that
there exists an $\tilde{s}\in 0^{n-k}11\{0,1\}^{k-2}$ that $\tilde{s}\notin U$.
Then $\tilde{s}$ must be in $S\setminus U$. Let set $W=\{w\mid w\in Y, MSB(w)<k-1\}$. It is easy to
see that $Y=D\cup E\cup W$. Because $MSB(\tilde{s})=k$ and $\tilde{s}\notin U$, $\tilde{s}$
must be in set $V$ with $V=\{d\oplus w\mid d\in D, w\in W\}$. However, for every element $v\in V$,
$v_{k-1}=0$. Since $\tilde{s}_{k-1}=1$, that is a contradiction.
\end{proof}

The correctness of Algorithm \ref{alg:qsg2} is also a direct result of Theorem \ref{thm:qsg2}. Moreover,
similar to Algorithm \ref{alg:qsg1},
$Y^{(k)}$ in Algorithm \ref{alg:qsg2} actually satisfies a stronger condition.
\begin{thm}\label{thm:correct2}
Algorithm \ref{alg:qsg2} correctly constructs an $n$-significance query set with input $n$.
\end{thm}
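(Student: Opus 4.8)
The statement to prove is Theorem \ref{thm:correct2}: Algorithm \ref{alg:qsg2} correctly constructs an $n$-significance query set with input $n$.

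The plan is to derive this as an immediate corollary of Theorem \ref{thm:qsg2}, exactly as was done for Algorithm \ref{alg:qsg1} via Theorem \ref{thm:alg1} and Theorem \ref{thm:correct1}. First I would observe that the algorithm's output is precisely $Y^{(n-2)}$, the value of $Y$ after the final ($k = n-2$) iteration of the outer loop. Then I would apply Theorem \ref{thm:qsg2} with $k = n-2$, which states that $Y^{(n-2)}$ is $(n-2)+2 = n$-significance. By the definition of $m$-significance, this means the covering set $S = \{a \oplus b \mid a, b \in Y^{(n-2)}\}$ satisfies $0^{n-n}\{0,1\}^{n} = \{0,1\}^{n} \subseteq S$, which is exactly the statement that the query set $Y$ is $n$-significance. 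Since $n > 2$ is assumed by the \texttt{Require} clause, the index $n-2 \geq 1$ is valid and Theorem \ref{thm:qsg2} applies.

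I would also want to confirm that an $n$-significance query set genuinely suffices for the three-step algorithmic scheme laid out before Algorithm \ref{alg:qsg1}: if $Y$ is $n$-significance, then for any hidden string $s \in \{0,1\}^n \setminus \{0^n\}$ we have $s \in S$, so there exist $a, b \in Y$ with $a \oplus b = s$; since $s \neq 0^n$ these are distinct, and by Simon's promise $f(a) = f(b)$, so querying all of $Y$ and outputting $a \oplus b$ for a colliding pair recovers $s$. Combined with Theorem \ref{thm:qsgeff}, which bounds $|Y|$ by $O(\sqrt{2^n})$, this gives the desired deterministic query complexity.

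There is essentially no obstacle here — the real work was already done in Theorem \ref{thm:qsg2} (and its supporting Lemmas \ref{lem:lem1} and \ref{lem:lem2}). The only point requiring a word of care is the boundary: the theorem is for ``all $k \geq 0$,'' but the induction base cases $k = 0, 1, 2$ were checked explicitly and the \texttt{Require} condition $n > 2$ ensures we invoke the conclusion only at $k = n - 2 \geq 1$, so no degenerate case arises. Hence the proof is a one-line instantiation of Theorem \ref{thm:qsg2}.

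\begin{proof}
By construction, the set returned by Algorithm \ref{alg:qsg2} is $Y^{(n-2)}$, i.e.\ the value of $Y$ after the final iteration $k = n-2$ of the outer loop (since $n > 2$ is required, this loop runs at least once). Applying Theorem \ref{thm:qsg2} with $k = n-2$, we conclude that $Y^{(n-2)}$ is $(n-2)+2 = n$-significance. By the definition of $m$-significance, its covering set $S = \{a \oplus b \mid a, b \in Y^{(n-2)}\}$ then satisfies $\{0,1\}^{n} = 0^{n-n}\{0,1\}^{n} \subseteq S$. Therefore the query set constructed by Algorithm \ref{alg:qsg2} is $n$-significance.
\end{proof}
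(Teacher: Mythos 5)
Your proposal is correct and follows exactly the paper's route: the paper also derives Theorem \ref{thm:correct2} as an immediate consequence of Theorem \ref{thm:qsg2} applied to the final loop iteration $k=n-2$, with all the substantive work residing in Theorem \ref{thm:qsg2} and Lemmas \ref{lem:lem1} and \ref{lem:lem2}.
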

\begin{col}\label{col:strict}
In Algorithm \ref{alg:qsg2}, $Y^{(k)}$ is strictly $k$+2-significance for all $k\geq 0$.
\end{col}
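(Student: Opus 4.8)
The plan is to mimic the proof of the analogous Corollary for Algorithm~\ref{alg:qsg1}. By Theorem~\ref{thm:qsg2} we already know one containment, namely $0^{n-k-2}\{0,1\}^{k+2}\subseteq S$ where $S$ is the covering set of $Y^{(k)}$; what remains is the reverse containment $S\subseteq 0^{n-k-2}\{0,1\}^{k+2}$, which together with Theorem~\ref{thm:qsg2} yields $S = 0^{n-k-2}\{0,1\}^{k+2}$, i.e. strict $(k{+}2)$-significance.

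\textbf{Key steps.} First I would recall that every element of $S$ has the form $a\oplus b$ with $a,b\in Y^{(k)}$. It therefore suffices to bound $MSB(a\oplus b)$ from above by $k+2$ for all such pairs, since $MSB(x)\le k+2$ is exactly the statement $x\in 0^{n-k-2}\{0,1\}^{k+2}$ (using the integer-form convention from the Remark). Second, by Equation~\ref{eq:loopinv3} we have $MSB(y)\le k+2$ for every $y\in Y^{(k)}$: indeed that invariant asserts $\max_{y\in Y^{(k)}}MSB(y)=k+2$, so in particular no element of $Y^{(k)}$ has a $1$ in a bit position $>k+2$. Third, the XOR of two $n$-bit strings each of whose bits beyond position $k+2$ are zero again has all bits beyond position $k+2$ equal to zero, hence $MSB(a\oplus b)\le k+2$. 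This gives $S\subseteq 0^{n-k-2}\{0,1\}^{k+2}$, and combined with Theorem~\ref{thm:qsg2} we conclude $S = 0^{n-k-2}\{0,1\}^{k+2}$ for all $k\ge 0$.

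\textbf{The main obstacle} is essentially bookkeeping with the index conventions rather than any genuine difficulty: one must be careful that $MSB$ values are measured consistently (the paper indexes bit positions from $1$, with $MSB(0^n)=0$ and $MSB(2^k)=k+1$), so that ``$MSB\le k+2$'' correctly translates into membership in the block $0^{n-k-2}\{0,1\}^{k+2}$, and that Equation~\ref{eq:loopinv3} is invoked for the right loop index. No case analysis on the structure of $Z^{(j)}$ is needed here, in contrast to the proof of Theorem~\ref{thm:qsg2}; the whole content is the observation that XOR cannot create a $1$ in a bit position where both operands have a $0$. I would therefore keep the write-up short, citing Theorem~\ref{thm:qsg2} for one inclusion and Equation~\ref{eq:loopinv3} for the other.
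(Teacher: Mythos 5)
Your proposal is correct and matches the paper's intended argument: the paper leaves this corollary without an explicit proof, noting it is "similar to Algorithm 4," and the proof it gives for that earlier corollary is exactly your argument — the invariant on $MSB$ (Equation \ref{eq:loopinv3} here, Equation \ref{eq:loopinv1} there) bounds every XOR of two query-set elements to give $S\subseteq 0^{n-k-2}\{0,1\}^{k+2}$, while Theorem \ref{thm:qsg2} gives the reverse inclusion. No gaps; your bookkeeping on the $MSB$ convention is consistent with the paper's.
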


Combining Theorem \ref{thm:qsgeff} and Theorem \ref{thm:correct2}, we can easily obtain
the upper bound of exact classical query complexity for Simon's problem.
\begin{thm}
There exists a classical deterministic algorithm that solves Simon's problem with $O(\sqrt{2^{n}})$ queries.
\end{thm}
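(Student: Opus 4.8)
The plan is to instantiate the three-step schema stated just before Algorithm \ref{alg:qsg1}, using Algorithm \ref{alg:qsg2} as the query-set generator, and then to read off correctness and the query bound directly from Theorem \ref{thm:correct2} and Theorem \ref{thm:qsgeff}.

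First I would run Algorithm \ref{alg:qsg2} on input $n$ to produce a set $Y\subseteq\{0,1\}^{n}$. By Theorem \ref{thm:correct2} this $Y$ is $n$-significance, so its covering set $S=\{a\oplus b\mid a,b\in Y\}$ satisfies $\{0,1\}^{n}\subseteq S$. Next I would query $f$ on every string in $Y$; this is the only place where queries are used. For correctness, note that the hidden string $s$ lies in $\{0,1\}^{n}\subseteq S$, so there exist $a,b\in Y$ with $a\oplus b=s$, and since $s\neq 0^{n}$ we have $a\neq b$ and hence $f(a)=f(b)$ by Simon's promise. Conversely, any two distinct strings $a',b'\in Y$ with $f(a')=f(b')$ must satisfy $a'\oplus b'=s$, again by the promise. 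Therefore, once all the values $\{f(y)\mid y\in Y\}$ are known, a deterministic scan over the pairs of $Y$ locates a colliding pair $(a,b)$, and returning $s=a\oplus b$ is correct. The whole procedure is deterministic because Algorithm \ref{alg:qsg2} is deterministic and the collision scan makes no queries.

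For the query count, Theorem \ref{thm:qsgeff} gives $|Y|=O(\sqrt{2^{n}})$, and since exactly the elements of $Y$ are queried, the algorithm uses $O(\sqrt{2^{n}})$ queries; Step 1 (building $Y$) and Step 3 (the pair scan and output) make no queries. The only remaining gap is the small cases $n\le 2$ excluded by the precondition $n>2$ of Algorithm \ref{alg:qsg2}: these are disposed of by brute force with $O(1)$ queries (for $n=1$ the only candidate is $s=1$; for $n=2$ one simply queries all four strings and reads off $s$). I do not expect a genuinely hard step here --- essentially all of the work, namely that the generated set is $n$-significance and that its cardinality is $O(\sqrt{2^{n}})$, has already been carried out in Theorems \ref{thm:correct2} and \ref{thm:qsgeff} together with Lemmas \ref{lem:lem1} and \ref{lem:lem2}; what is left is only the observation (already made in the lower-bound discussion) that an $n$-significance query set is precisely what a classical deterministic solver needs, plus the bookkeeping that no adaptive queries beyond $Y$ are required.
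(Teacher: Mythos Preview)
Your proposal is correct and follows essentially the same route as the paper: instantiate the three-step schema with Algorithm~\ref{alg:qsg2}, invoke Theorem~\ref{thm:correct2} for the $n$-significance (hence a guaranteed collision) and Theorem~\ref{thm:qsgeff} for the $O(\sqrt{2^{n}})$ size of the query set. Your treatment is in fact slightly more careful than the paper's own remark, since you spell out why a collision in $Y$ must yield $s$ and you dispose of the small cases $n\le 2$ that fall outside the precondition of Algorithm~\ref{alg:qsg2}.
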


\begin{rmk}
Recall that running Algorithm \ref{alg:qsg2} just finishes  the first step of our whole algorithm, and
the second step does the queries. Since the cardinality of the query set is $O(\sqrt{2^{n}})$,
it is simple to see that the upper bound of query complexity is also $O(\sqrt{2^{n}})$.
\end{rmk}

\section{Concluding remarks}

It is believed that quantum computers are likely more powerful than classical computers, and Simon's problem is
one of the instances that support this point of view. By proving an optimal separation in exact query
complexities for Simon's problem, we have tested the power of exact quantum computation for a certain
problem and a certain computational complexity.

However, our separation is optimal only up to a constant factor, and one may ask for an even tighter one.
Both our exact quantum query algorithm and Brassard and H{\o}yer's one \cite{Brassard1997} require $3n-3$ queries, and the constant factor is bigger
than that in the proved lower bound by \cite{Koiran2007}. Mihara and Sung's algorithm \cite{Mihara2003} only requires $n-1$ invocations of their oracle, but
 they did not show its construction, and thus we can not count the actual invocations of $O_{f}$.

Also, we have not known whether the covering set constructed by Algorithm \ref{alg:qsg2} is the smallest one.
In fact, Corollary \ref{col:strict} imposes a strong condition on that algorithm, so one might utilize
a weaker condition to design a more efficient algorithm. Such problems may be worthy of further investigations.

\section*{Acknowledgements}
This work is supported in
part by the National Natural Science Foundation of China (Nos.  61572532, 61272058).

%% \section{}
%% \label{}

%% References
%%
%% Following citation commands can be used in the body text:
%% Usage of \cite is as follows:
%%   \cite{key}         ==>>  [#]
%%   \cite[chap. 2]{key} ==>> [#, chap. 2]
%%

%% References with bibTeX database:

%% \bibliographystyle{elsarticle-num}
%% \bibliography{elsarticle-template-num}

%% Authors are advised to submit their bibtex database files. They are
%% requested to list a bibtex style file in the manuscript if they do
%% not want to use elsarticle-num.bst.

%% References without bibTeX database:

%% \bibitem must have the following form:
%%   \bibitem{key}...
%%

\end{document}